\newtheorem{myDef}{Definition}
\newtheorem{myPro}{Proposition}
\newtheorem{prop}{Proposition}
\newtheorem{corollary}{Corollary}
\begin{document}
\title{Building Confidential and Efficient Query Services in the Cloud with RASP Data Perturbation}

\author{Huiqi Xu, Shumin Guo, Keke Chen\\
Data Intensive Analysis and Computing Lab\\
Ohio Center of Excellence in Knowledge Enabled Computing\\
Department of Computer Science and Engineering\\
Wright State University, Dayton, OH 45435}

\IEEEcompsoctitleabstractindextext{%
\begin{abstract}
With the wide deployment of public cloud computing infrastructures, using clouds to host data query services has become an appealing solution for the advantages on scalability and cost-saving. However, some data might be sensitive that the data owner does not want to move to the cloud unless the data confidentiality and query privacy are guaranteed. On the other hand, a secured query service should still provide efficient query processing and significantly reduce the in-house workload to fully realize the benefits of cloud computing.  We propose the RASP data perturbation method to provide secure and efficient range query and kNN query services for protected data in the cloud. The RASP data perturbation method combines order preserving encryption, dimensionality expansion, random noise injection, and random projection, to provide strong resilience to attacks on the perturbed data and queries. It also preserves multidimensional ranges, which allows existing indexing techniques to be applied to speedup range query processing. The kNN-R algorithm is designed to work with the RASP range query algorithm to process the kNN queries. We have carefully analyzed the attacks on data and queries under a precisely defined threat model and realistic security assumptions. Extensive experiments have been conducted to show the advantages of this approach on efficiency and security.
\end{abstract}

\begin{keywords}
query services in the cloud, privacy, range query, kNN query
\end{keywords}}

\maketitle
\IEEEdisplaynotcompsoctitleabstractindextext
\IEEEpeerreviewmaketitle

\section{Introduction}
\label{sec:introduction}

Hosting data-intensive query services in the cloud is increasingly popular because of the unique advantages in scalability and cost-saving. With the cloud infrastructures, the service owners can conveniently scale up or down the service and only pay for the hours of using the servers. This is an attractive feature because the workloads of query services are highly dynamic, and it will be expensive and inefficient to serve such dynamic workloads with in-house infrastructures \cite{cloud09}. However, because the service providers lose the control over the data in the cloud, data confidentiality and query privacy have become the major concerns. Adversaries, such as curious service providers, can possibly make a copy of the database or eavesdrop users' queries, which will be difficult to detect and prevent in the cloud infrastructures.

While new approaches
are needed to preserve data confidentiality and query privacy, the efficiency of query services and the benefits of using the clouds should also be preserved. It will not be meaningful to provide slow query services as a result of security and privacy assurance. It is also not practical for the data owner to use a significant amount of in-house resources, because the purpose of using cloud resources is to reduce the need of maintaining scalable in-house infrastructures. Therefore, there is an intricate relationship among the data confidentiality, query privacy, the quality of service, and the economics of using the cloud.

We summarize these requirements for constructing a practical query service in the cloud as the CPEL criteria: data Confidentiality,  query Privacy, Efficient query processing, and Low in-house processing cost. Satisfying these requirements will dramatically increase the complexity of constructing query services in the cloud. Some related approaches have been developed to address some aspects of the problem. However, they do not satisfactorily address all of these aspects. For example, the crypto-index \cite{hakan02sigmod} and Order Preserving Encryption (OPE) \cite{rakesh04} are vulnerable to the attacks. The enhanced crypto-index approach \cite{hore04} puts heavy burden on the in-house infrastructure to improve the security and privacy. The New Casper approach \cite{mokbel06} uses cloaking boxes to protect data objects and queries, which affects the efficiency of query processing and the in-house workload. We have summarized the weaknesses of the existing approaches in Section \ref{sec:related-work}.

We propose the RAndom Space Perturbation (RASP) approach to constructing practical range query and k-nearest-neighbor (kNN) query services in the cloud. The proposed approach will address all the four aspects of the CPEL criteria and aim to achieve a good balance on them. The basic idea is to randomly transform the multidimensional datasets with a combination of order preserving encryption, dimensionality expansion, random noise injection, and random project, so that the utility for processing range queries is preserved. The RASP perturbation is designed in such a way that the queried ranges are securely transformed into polyhedra in the RASP-perturbed data space, which can be efficiently processed with the support of indexing structures in the perturbed space. The RASP kNN query service (kNN-R) uses the RASP range query service to process kNN queries. The key components in the RASP framework include (1) the definition and properties of RASP perturbation; (2) the construction of the privacy-preserving range query services; (3) the construction of privacy-preserving kNN query services; and (4) an analysis of the attacks on the RASP-protected data and queries.

In summary, the proposed approach has a number of unique contributions.
\begin{itemize}
\item The RASP perturbation is a unique combination of OPE, dimensionality expansion, random noise injection, and random projection, which provides strong confidentiality guarantee. 
\item The RASP approach preserves the topology of multidimensional range in secure transformation, which allows indexing and efficiently query processing.

\item The proposed service constructions are able to minimize the in-house processing workload because of the low perturbation cost and high precision query results. This is an important feature enabling practical cloud-based solutions.
\end{itemize}
We have carefully evaluated our approach with synthetic and real datasets. The results show its unique advantages on all aspects of the CPEL criteria.

The entire paper is organized as follows. In Section \ref{sec:RASP}, we define the RASP perturbation method, describe its major properties, and analyze the attacks to the RASP perturbed data. We also introduce the framework for constructing the query services with the RASP perturbation. In Section \ref{sec:RANGE} we describe the algorithm for transforming queries and processing range queries. In Section \ref{sec:kNN-R}, the range query service is extended to handle kNN queries. When describing these two services, we also analyze the attacks on the query privacy. Finally, we present some related approaches in Section \ref{sec:related-work} and analyze their weaknesses in terms of the CPEL criteria.

\section{Query Services in the Cloud} \label{sec:pre}
This section presents the notations, the system architecture, and the threat model for the RASP approach, and prepares for the security analysis \cite{bau11} in later sections.
The design of the system architecture keeps the cloud economics in mind so that most data storage and computing tasks will be done in the cloud. The threat model makes realistic security assumptions and clearly defines the practical threats that the RASP approach will address.

\subsection{Definitions and Notations}
First, we establish the notations. For simplicity, we consider only single database tables, which can be the result of denormalization from multiple relations. A database table consists of $n$ records and $d$ searchable attributes. We also frequently refer to an attribute as a dimension or a column, which are exchangeable in the paper. Each record can be represented as a vector in the multidimensional space, denoted by low case letters. If a record $x$ is $d$-dimensional, we say $x\in \mathbb{R}^d$, where $\mathbb{R}^d$ means the d-dimensional vector space.  A table is also treated as a $d\times n$ matrix, with records represented as column vectors. We use capital letters to represent a table, and indexed capital letters, e.g., $X_i$, to represent columns. Each column is defined on a numerical domain. Categorical data columns are allows in range query, which are converted to numerical domains as we will describe in Section \ref{sec:RASP}.

Range query is an important type of query for many data analytic tasks from simple aggregation to more sophisticated machine learning tasks. Let  $T$ be a table and $X_i$, $X_j$, and $X_k$ be the real valued attributes in $T$, and $a$ and $b$ be some constants. Take the counting query for example. A typical range query looks like
\begin{quote}
 \emph{select count(*) from T\\
 where $X_i \in [a_i, b_i] $ and $X_j \in (a_j, b_j)$ and $X_k=a_k$,
}\end{quote}
which calculates the number of records in the range defined by conditions on $X_i$, $X_j$, and $X_k$. Range queries may be applied to arbitrary number of attributes and conditions on these attributes combined with conditional operators ``and''/``or''. We call each part of the query condition that involves only one attribute as a \emph{simple condition}. A simple condition like $X_i \in [a_i, b_i] $ can be described with two half space conditions $X_i \leq b_i$ and $-X_i \leq -a_i$. Without loss of generality, we will discuss how to process half space conditions like $X_i \leq b_i$ in this paper. A slight modification will extend the discussed algorithms to handle other conditions like $X_i < b_i$ and $X_i=b_i$.

kNN query is to find the closest $k$ records to the query point, where the Euclidean distance is often used to measure the proximity. It is frequently used in location-based services for searching the objects close to a query point, and also in machine learning algorithms such as hierarchical clustering and kNN classifier. A kNN query consists of the query point and the number of nearest neighbors, $k$.

\subsection{System Architecture}
We assume that a cloud computing infrastructure, such as Amazon EC2, is used to host the query services and large datasets. The purpose of this architecture is to extend the \emph{proprietary database servers} to the public cloud, or use a hybrid private-public cloud to achieve scalability and reduce costs while maintaining confidentiality.

Each record $x$ in the outsourced database contains two parts: the RASP-processed attributes $D'=F(D, K)$ and the encrypted original records, $Z=E(D, K')$, where $K$ and $K'$ are keys for perturbation and encryption, respectively. The RASP-perturbed data $D'$ are for  indexing and query processing. Figure \ref{fig:framework} shows the system architecture for both RASP-based range query service and kNN service.

There are two clearly separated groups: the trusted parties and the untrusted parties. The trusted parties include the data/service owner, the in-house proxy server, and the authorized users who can only submit queries. The data owner exports the perturbed data to the cloud. Meanwhile, the authorized users can submit range queries or kNN queries to learn statistics or find some records.
The untrusted parties include the curious cloud provider who hosts the query services and the protected database.  The RASP-perturbed data will be used to build indices to support query processing.

There are a number of basic procedures in this framework: (1) $F(D)$ is the RASP perturbation that transforms the original data $D$ to the perturbed data $D'$; (2) $Q(q)$ transforms the original query $q$ to the protected form $q'$ that can be processed on the perturbed data; (3) $H(q', D')$ is the query processing algorithm that returns the result $R'$. When the statistics such as SUM or AVG of a specific dimension are needed, RASP can work with partial homomorphic encryption such as Paillier encryption \cite{paillier99} to compute these statistics on the encrypted data, which are then recovered with the procedure $G(R')$.

\begin{figure}
\centering
\includegraphics[width = .7\linewidth]{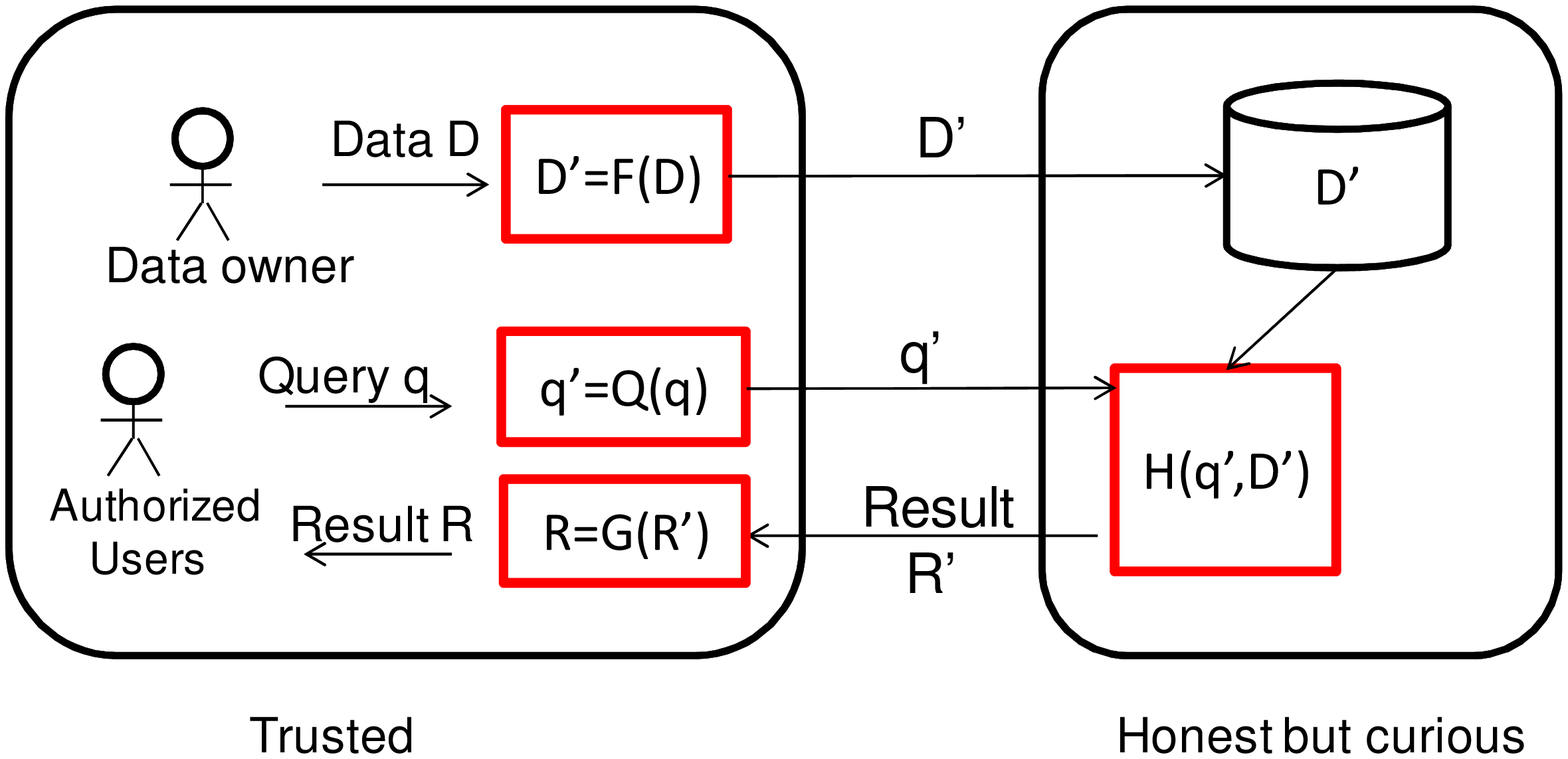}
\caption{The system architecture for RASP-based query services.} \label{fig:framework}
\end{figure}

\subsection{Threat Model}

\textbf{Assumptions.} Our security analysis is built on the important features of the architecture. Under this setting, we believe the following assumptions are appropriate.
\begin{itemize}
\item Only the authorized users can query the proprietary database. Authorized users are not malicious and will not intentionally breach the confidentiality. We consider insider attacks are orthogonal to our research; thus, we can exclude the situation that the authorized users collude with the untrusted cloud providers to leak additional information.
\item The client-side system and the communication channels are properly secured and no protected data records and queries can be leaked.
\item Adversaries can see the perturbed database, the transformed queries, the whole query processing procedure, the access patterns, and understand the same query returns the same set of results, but nothing else.
\item Adversaries can possibly have the global information of the database, such as the applications of the database, the attribute domains, and possibly the attribute distributions, via other published sources (e.g., the distribution of sales, or patient diseases, in public reports).
\end{itemize}

These assumptions can be maintained and reinforced by applying appropriate security policies. Note that this model is equivalent to the eavesdropping model equipped with the plaintext distributional knowledge in the cryptographic setting.

\textbf{Protected Assets.} Data confidentiality and query privacy should be protected in the RASP approach. While the integrity of query services is also an important issue, it is orthogonal to our study. Existing integrity checking and preventing techniques \cite{xie07,sion05,liff06} can be integrated into our framework. Thus, the integrity problem will be excluded from the paper, and we can assume the curious cloud provider is interested in the data and queries, but it will honestly follow the protocol to provide the infrastructure service.

\textbf{Attacker Modeling.} The goal of attack is to recover (or estimate) the original data from the perturbed data, or identify the exact queries (i.e., location queries) to breach users' privacy. According to the level of prior knowledge the attacker may have, we categorize the attacks into two categories.
\begin{itemize}
\item Level 1: The attacker knows only the perturbed data and transformed queries, without any other prior knowledge. This corresponds to the cipertext-only attack in the cryptographic setting.
\item Level 2: The attacker also knows the original data distributions, including individual attribute distributions and the joint distribution (e.g., the covariance matrix) between attributes. In practice, for some applications, whose statistics are interesting to the public domain, the dimensional distributions might have been published via other sources.
\end{itemize}
These levels of knowledge are appropriate according to the assumptions we hold. We will analyze the security based on this threat model.

\textbf{Security Definition.} Different from the traditional encryption schemes, attackers can also be satisfied with good estimation. Therefore, we will investigate two levels of security definitions: (1) it is computationally intractable for the attacker to recover the \emph{exact} original data based on the perturbed data; (2) the attacker cannot \emph{effectively estimate} the original data. The effectiveness measure is defined with the NR\_MSE measure in Section \ref{sec:DataAttack}.


\section{RASP: Random Space Perturbation}\label{sec:RASP}
In this section, we present the basic definition of RAndom Space Perturbation (RASP) method and its properties. We will also discuss the attacks on RASP perturbed data, based on the threat model given in Section \ref{sec:pre}.

\subsection{Definition of RASP}
RASP is one type of multiplicative perturbation, with a novel combination of OPE, dimension expansion, random noise injection, and random projection. Let's consider the multidimensional data are numeric and in multidimensional vector space\footnote{For categorical attributes, we use the following simple mapping because it will not break the query semantics. For a categorical attribute $X_i$, the values $\{c_1,\ldots,c_m\}$ in the domain are mapped to $\{1,\ldots,m\}$. A query condition on categorical values, say $X_i=c_j$, is then converted to $j-\delta\leq X_i\leq j+\delta$, where $\delta \in (0,1)$}. The database has $k$ searchable dimensions and $n$ records, which makes a $d \times n$ matrix $X$. The \emph{searchable} dimensions can be used in queries and thus should be indexed. Let $x$ represent a $d$-dimensional record, $x\in \mathbb{R}^d$. Note that in the $d$-dimensional vector space $\mathbb{R}^d$, the range query conditions are represented as half-space functions  and a range query is translated to finding the point set in corresponding polyhedron area described by the half spaces \cite{boyd04}.

The RASP perturbation involves three steps. Its security is based on the existence of random invertible real-value matrix generator and random real value generator. For each $k$-dimensional input vector $x$,
\begin{enumerate}
\item An order preserving encryption (OPE) scheme \cite{rakesh04}, $E_{ope}$ with keys $K_{ope}$, is applied to each dimension of $x$: $E_{ope}(x, K_{ope}) \in \mathbb{R}^d$ to change the dimensional distributions to normal distributions with each dimension's value order still preserved.
\item The vector is then extended to $d+2$ dimensions as $G(x) = ( (E_{opt}(x))^T, 1, v)^T$, where the $(d+1)$-th dimension is always a $1$ and the $(d+2)$-th dimension, $v$, is drawn from  a random real number generator $RNG$ that generates random values from a tailored normal distributions. We will discuss the design of RNG and OPE later.
\item The $(d+2)$-dimensional vector is finally transformed to
\begin{equation}\label{eq:trans}
F(\mathbf{x}, K=\{A, K_{ope}, RG\}) =  A ( (E_{ope}(x))^T, 1, v)^T ,
\end{equation}
where $A$ is a $(d+2)\times (d+2)$ randomly generated invertible matrix with $a_{ij} \in \mathbb{R}$ such that there are at least two non-zero values in each row of $A$ and the last column of $A$ is also non-zero\footnote{Currently, we use a random invertible matrix generator that draws matrix elements uniformly at random from the standard normal distribution and check the matrix invertibility and the non-zero conditions.}.
\end{enumerate}
$K_{ope}$ and $A$ are shared by all vectors in the database, but $v$ is randomly generated for each individual vector. Since the RASP-perturbed data records are only used for indexing and helping query processing, there is no need to recover the perturbed data. As we mentioned, in the case that original records are needed, the encrypted records associated with the RASP-perturbed records will be returned. We give the detailed algorithm in Appendix.

\textbf{Design of OPE and RNG.}
We use the OPE scheme to convert all dimensions of the original data to the standard normal distribution $\mathcal{N}(0,1)$ in the limited domain $[-\beta,\beta]$. $\beta$ can be selected as a value $>=4$, as the range $[-4, 4]$ covers more than 99\%  of the population.  This can be done with an algorithm such as the one described in \cite{rakesh04}. The use of OPE allows queries to be correctly transformed and processed. Similarly, we draw random noises $v$ from $\mathcal{N}(0,1)$ in the limited domain $[-\beta,\beta]$. Such a design makes the extended noise dimension indifferent from the data dimensions in terms of the distributions.

The design of such an extended data vector $(E_{ope}(x)^T,1,v)^T$ is to enhance the data and query confidentiality. The use of OPE is to transform large-scale or infinite domains to normal distributions, which address the distributional attack. The $(d+1)$-th homogeneous  dimension is for hiding the query content. The $(d+2)$-th dimension injects random noise in the perturbed data and also protects the transformed queries from attacks. The rationale behind different aspects will be discussed clearly in later sections.

\subsection{Properties of RASP}
RASP has several important features. First, RASP does not preserve the order of dimensional values because of the matrix multiplication component, which distinguishes itself from order preserving encryption (OPE) schemes, and thus does not suffer from the distribution-based attack (details in Section \ref{sec:related-work}). An OPE scheme maps a set of single-dimensional values to another, while keeping the value order unchanged. Since the RASP perturbation can be treated as a combined transformation $F(G(E_{ope}(x)))$, it is sufficient to show that $F(y)=Ay$ does not preserve the order of dimensional values, where $y\in \mathbb{R}^{d+2}$ and $A \in \mathbb{R}^{(d+2)\times(d+2)}$. The proof is straightforward as shown in Appendix.

Second, RASP does not preserve the distances between records, which prevents the perturbed data from distance-based attacks \cite{keke07sdm}. Because none of the transformations in the RASP: $E_{ope}$, $G$, and $F$ preserves distances, apparently the RASP perturbation will not preserve distances. Similarly, RASP does not preserve other more sophisticated structures such as covariance matrix and principal components \cite{jolliffe86}. Therefore, the PCA-based attacks such as \cite{huang05,liu06pkdd} do not work as well.

Third, the original range queries can be transformed to the RASP perturbed data space, which is the basis of our query processing strategy. A range query describes a hyper-cubic area (with possibly open bounds) in the multidimensional space. In Section \ref{sec:RANGE}, we will show that a hyper-cubic area in the original space is transformed to a polyhedron with the RASP perturbation. Thus, we can search the points in the polyhedron to get the query results.


\subsection{Data Confidentiality Analysis} \label{sec:DataAttack}
As the threat model describes, attackers might be interested in finding the exact original data records or estimating them based on the perturbed data. For estimation attack, if the estimation is sufficiently accurate (above certain accuracy threshold), we say the perturbation is not secure. Below, we define the measure for evaluating the effectiveness of estimation attacks.

\subsubsection{Evaluating Effectiveness of Estimation Attacks}
Because attackers may not need to exactly recover the original values, an accurate estimation will be sufficient. A measure is needed to define the ``accuracy'' or ``uncertainty'' as we mentioned. We use the commonly used mean-squared-error (MSE) to evaluate the effectiveness of attack. To be semantically consistent, the $j$-th dimension can be treated as sample values drawn from a random variable $X_j$. Let $x_{ij}$ be the value of the $i$-th original record in $j$-th dimension and $\hat{x}_{ij}$ be the estimated value. The MSE for the $j$-th dimension can be defined as
\[
MSE(X_j, \hat{X}_j) = \frac{1}{n}\sum_{i=1}^n (x_{ij}-\hat{x}_{ij})^2,
\]
which is equivalent to the variance: var$(X_j-\hat{X}_j)$. The square root of MSE (RMSE) represent the uncertainty of the estimation - for an estimated value $\hat{x}$, the original value $x$ could be in the range ($\hat{x}$ - RMSE, $\hat{x}$ +RMSE). Thus, the length of the range, 2*RMSE, also represents the accuracy of the estimation.

However, this length is subject to the length of the domain.
Thus, we use the normalized square root of MSE (NR\_MSE).
\begin{equation}
\text{NR\_MSE} (X_j) = 2\sqrt{MSE(X_i, \hat{X_j})}/\text{domain length},
\end{equation}
instead, which is intuitively the rate between the uncertain range and the whole domain.

To compare MSE for multiple columns, we also need to normalize these two series $\{x_{ij}\}$ and $\{\hat{x}_{ij}\}$ to eliminate the difference on domain scales. The normalization procedure \cite{draper98} is described as follows. Assume the mean and variance of the series $\{x_{ij}\}$ is $\mu_j$ and $\sigma_j^2$, correspondingly. The series is transformed by $x_{ij} \leftarrow (x_{ij}-\mu_j)/\sigma_j$. A similar procedure is also applied to the series $\{\hat{x}_{ij}\}$. For the normalized domains, the range $[-2, 2]$ almost covers the whole population\footnote{For a normal distribution $N(\mu, \sigma^2)$, the range $(\mu-2\sigma, \mu+2\sigma)$ covers about 95\% of the population. We use this length $4\sigma$ to approximately represent the majority of population for all other distributions, as normal distribution is a good approximation for many applications.} \cite{draper98}. Therefore, for normalized series, NR\_MSE is simply RMSE/2.

For an attack that can only result in low-accuracy estimation (e.g., NR\_MSE $\geq 20\%$, the uncertainty is more than 20 \% of the domain length.), we call the RASP-perturbed dataset is \emph{resilient} to that attack. Intuitively, NR\_MSE higher than 100\% will not be very meaningful. Thus, we set the absolute upper bound to be 100\%. We will discuss the specific upper bounds according to the level of prior knowledge.

\subsubsection{Prior-Knowledge Based Analysis}
Below, we analyze the security under the two levels of knowledge the attacker may have, according to the two levels of security definitions: exact match and statistical estimation.

\noindent\textbf{Naive Estimation. } We assume each value in the vector or matrix is encoded with $n$ bits.  Let the perturbed vector $p$ be drawn from a random variable $\mathcal{P}$, and the original vector $x$ be drawn from a random variable $\mathcal{X}$. We show that naive estimation is computationally intractable to identify the exact original data with the perturbed data, if we use a random invertible real matrix generator and a random real value generator. The goal is to show  the number of valid $X$ dataset in terms of a known perturbed dataset $P$. Below we discuss a simplified version that contains no OPE component - the OPE version has at least the same level of security. 
\begin{prop}
For a known perturbed dataset $P$, there exists $O(2^{(d+1)(d+2)n})$ candidate $X$ datasets in the original space.
\end{prop}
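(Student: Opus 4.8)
The plan is to treat the perturbation as a single matrix factorization and to count how many factorizations are consistent with the observed $P$. Dropping the OPE component (as the statement permits), the perturbed table is $P = A Y$, where $Y$ is the $(d+2)\times n$ extended matrix whose top $d$ rows are the original data $X$, whose $(d+1)$-th row is the all-ones row, and whose last row is the noise vector $v$; here $A$ is an unknown invertible $(d+2)\times(d+2)$ matrix. An adversary holding only $P$ must enumerate the pairs $(A,Y)$ that reproduce $P$ while respecting the structural constraints on $Y$, and each admissible $Y$ yields a candidate $X$ as its top $d$ rows. So the quantity to bound is the number of such admissible reconstructions.

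First I would reparameterize by the inverse $M = A^{-1}$, so that every candidate reconstruction is $Y = M P$ for an invertible $M$, and the only structural constraint becomes that the $(d+1)$-th row of $M P$ equals the all-ones vector $\mathbf{1}^T$. Writing $m^T$ for the $(d+1)$-th row of $M$, this is the linear system $m^T P = \mathbf{1}^T$. Because the genuine factorization forces $\mathbf{1}^T$ into the row space of $P$, and $P$ has full row rank $d+2$ (as $A$ is invertible and $Y$ is generically full rank for $n \ge d+2$), this system has exactly one solution. Hence the homogeneous dimension pins down precisely one row of $M$, while the remaining $d+1$ rows are unconstrained apart from preserving invertibility.

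The count then follows by parameter counting: $M$ has $(d+2)^2$ entries, of which one row of $d+2$ entries is fixed, leaving $(d+2)^2 - (d+2) = (d+1)(d+2)$ free real entries. Under the stated $n$-bit encoding of each value, every free entry ranges over $2^n$ possibilities, so there are on the order of $2^{(d+1)(d+2)n}$ admissible matrices $M$, hence $\Theta(2^{(d+1)(d+2)n})$ candidate reconstructions $Y = MP$, each yielding a candidate original dataset. Since invertibility fails only on a lower-order set of coefficient choices, and since $M \mapsto MP$ is injective when $P$ has full row rank, distinct admissible $M$ produce distinct reconstructions, so the bound is genuine and not merely an upper estimate.

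The step I expect to be the main obstacle is reconciling the exponent $(d+1)(d+2)$ with the phrase ``candidate $X$ datasets.'' Only the first $d$ rows of $M$ influence $X$ (the last row merely selects the noise values $v$), so counting distinct $X$ in isolation would give $2^{d(d+2)n}$; the stated bound additionally absorbs the $2^{(d+2)n}$ choices of the noise row. The argument must therefore count full admissible reconstructions $Y=MP$ as the candidate extended datasets, or equivalently invoke the design goal that the noise row is distributionally indistinguishable from a data row so that all $(d+1)(d+2)$ free entries are part of the adversary's unknown. Establishing this cleanly — rather than the secondary, routine check that invertibility and domain restrictions prune only a negligible fraction of candidates — is where the care is needed.
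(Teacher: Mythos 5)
Your proposal is correct and follows essentially the same route as the paper: the paper likewise fixes the $(d+1)$-th row of $B=A^{-1}$ using the all-ones constraint $B_{d+1}P=[1,\ldots,1]$, lets the other $d+1$ rows (each with $d+2$ entries of $n$ bits) vary freely to get $2^{(d+1)(d+2)n}$ candidates, and discards the negligible non-invertible fraction via the random-matrix bound. The caveat you raise about distinct noise rows yielding the same $X$ (so that the count of distinct $X$ alone would be $2^{d(d+2)n}$) is a fair observation that the paper itself glosses over with ``correspondingly, there are a same number of candidate $X$,'' but it does not change the conclusion of negligible recovery probability.
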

\begin{proof}
For a given perturbation $P=AZ$, where $Z$ is $X$ with the two extended dimensions,  we use $B_{d+1}$ to represent the $(d+1)$-th row of $A^{-1}$. Thus, $B_{d+1}P = [1,\ldots, 1]$, i.e., the appended $(d+1)$-th row of $Z$. Keeping $B_{d+1}$ unchanged, we randomly generate other rows of $B$ for a candidate $\hat{B}$. The result $\hat{Z}=\hat{B}P$ is a validate estimate of $Z$ if $\hat{B}$ is invertible. Thus, the number of candidate $X$ is the number of invertible $\hat{B}$.

The total number of $\hat{B}$ including non-invertible ones is $2^{(d+1)(d+2)n}$. Based on the theory of invertible random matrix \cite{rudelson09}, the probability of generating a non-invertible random matrix is less than $\exp^{-c(d+2)}$ for some constant $c$. Thus, there are about $(1-\exp^{-c(d+2)})2^{(d+1)(d+2)n}$ invertible $\hat{B}$. Correspondingly, there are a same number of candidate $X$.
\end{proof}
Thus, finding the exact $X$ has a negligible probability in terms of the number of bits, $n$.

As the candidates have an equal probability over the whole domain, according to the definition of NR\_MSE, the uncertain range is the same as the whole domain, resulting in NR\_MSE $=100\%$.

\noindent \textbf{Distribution-based Estimation. }
With the known distributional information, the attacker can do more on estimating the original data. The known most relevant method is called Independent Component Analysis (ICA) \cite{hyvarinen01}. For a multiplicative perturbation $P=AX$, the basic idea is to find an optimal projection, $wP$, where $w$ is a $d+2$ dimension row vector, to result in a row vector with its value distribution close to that of one original attribute. It can be extended to find a matrix $W$, so that $WP$ gives \emph{independent} and \emph{non-gaussian} rows, i.e., a good estimate of $X$.

The ICA algorithms \cite{hyvarinen01,hastie01} are optimization algorithms that try to find such projections by  maximizing the \emph{non-gaussianity}\footnote{Non-gaussianity means the distribution is not normal distribution.} of the projection $wP$. The non-gaussianity of the original attributions is crucial because any projection of a multidimensional normal distribution is still a normal distribution, which leaves no clue for recovery.

Therefore, with our design of OPE and the noise dimension in Section \ref{sec:RASP}, we have the following result.
\begin{prop}
There are $O(2^{dn})$ candidate projection vectors, $w$, that lead to the same level of non-gaussianity.
\end{prop}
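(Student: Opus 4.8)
The plan is to reduce the counting of projection vectors $w$ to counting the associated combination vectors $u = wA$. Since $A$ is a fixed invertible matrix, the map $w \mapsto u$ is a bijection on $\mathbb{R}^{d+2}$, so it suffices to count the $u$ for which $wP = uZ$ attains the common non-gaussianity level. Writing out the extended record, I would express the projection as
\[
wP = uZ = \sum_{i=1}^{d} u_i Z_i + u_{d+1} + u_{d+2}\,v,
\]
where $Z_1,\ldots,Z_d$ are the OPE-transformed attributes, the $(d+1)$-th row is the constant $1$, and $v$ is the injected noise.

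The first key step is to invoke the design of OPE and RNG: every data dimension $Z_i$ and the noise $v$ are forced to follow $\mathcal{N}(0,1)$, and---using the property recalled earlier that any projection of a multidimensional normal distribution is again normal---the $d+1$ Gaussian-contributing directions behave as a jointly Gaussian block. Consequently $wP$ is, up to the constant shift $u_{d+1}$, normally distributed for \emph{every} choice of $u$, so its non-gaussianity equals the same minimal value regardless of how the attacker tunes the projection. This is precisely why ICA has no distinguished maximizer to lock onto.

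The second step is to count this level set under the normalization that ICA imposes, namely zero mean and unit variance of the projection. Centering pins down the constant component $u_{d+1}$, and the unit-variance requirement $\mathrm{var}(uZ)=1$ contributes a single non-degenerate scalar equation on the $(d+1)$-dimensional active vector $(u_1,\ldots,u_d,u_{d+2})$. This leaves a $d$-dimensional solution manifold of admissible projections, all sharing the identical non-gaussianity level. Encoding each of the $d$ free real parameters with $n$ bits then yields $O(2^{dn})$ candidate vectors $u$, and hence the same number of projection vectors $w$.

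I expect the main obstacle to be the rigorous justification of the joint-Gaussianity step, since OPE only guarantees that each dimension is \emph{marginally} $\mathcal{N}(0,1)$; the joint law still carries the original dependence structure and need not be exactly multivariate normal. The clean count therefore rests on the approximation that the Gaussian block is jointly normal, or, more carefully, on bounding the residual non-gaussianity so that the level set retains dimension $d$. A secondary point requiring care is confirming that the variance constraint is non-degenerate---so that it removes exactly one degree of freedom---and making precise whether ``the same level of non-gaussianity'' is meant as exact equality or equality within the resilience threshold, since that choice is what ultimately controls whether the admissible set is exactly $d$-dimensional.
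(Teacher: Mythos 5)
Your proof takes essentially the same route as the paper's: both arguments hinge on the single fact that any linear projection of a (treated-as) multivariate normal sample is again normal, so the ICA non-gaussianity objective is flat and cannot distinguish the true unmixing direction. You actually go somewhat further than the paper on the two points it leaves implicit --- you derive the exponent $d$ from the ICA normalization constraints (centering fixes $u_{d+1}$ and unit variance removes one more degree of freedom, whereas the paper simply asserts the $O(2^{dn})$ count), and you correctly flag that OPE only guarantees \emph{marginal} normality of each dimension, the same joint-normality approximation the paper makes silently when it ``treats'' the OPE-encrypted matrix as a sample from $\mathcal{N}(\mu,\Sigma)$.
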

\begin{proof}
The OPE encrypted matrix $\bar{X}$ (with the homogeneous dimension excluded, which can be possibly recovered) can be treated as a sample set drawn from a multivariate normal distribution $\mathcal{N}(\mu, \Sigma)$. Any invertible transformation $\bar{P}=\bar{A}\bar{X}$ will result in another multivariate normal distribution $\mathcal{N}(\bar{A}\mu, \bar{A}\Sigma \bar{A}^T)$. Thus, any projection $w\bar{P}$ will not change the gaussianity, and there are $O(2^{dn})$ such candidates of $w$.
\end{proof}
Thus, the probability to identify the right projection is negligible in terms of the number of bits $n$. This shows that any ICA-style estimation that depends on non-guassianity is equally ineffective to the RASP perturbation.

In addition to ICA, Principal Component Analysis (PCA) based attack is another possible distributional attack, which, however, depends on the preservation of covariance matrix \cite{liu06pkdd}. Because the covariance matrix is not preserved in RASP perturbation, the PCA attack cannot be used on RASP perturbed data. It is unknown whether there are other distributional methods for approximately separating $X$ or $A$ from the perturbed data $P$, which will be studied in the ongoing work.

In the worst-case estimation, the attacker can simply draw a sample of $\hat{X}_j$ from the known distribution of the original $X_j$; thus, $X_j$ and $\hat{X}_j$ are independent but have the same distribution. It follows that $MSE=var(X_j-\hat{X}_j) = var(X_j) +var(\hat{X}_j) = 2var(X_j)=2\sigma^2$. Correspondingly, NR\_MSE = $(2\sqrt{MSE})/(4\sigma) =\sqrt{2}/2 \approx 71\%$.

\section{RASP Range-Query Processing} \label{sec:RANGE}
Based on the RASP perturbation method, we design the services for two types of queries: range query and kNN query. This section will dedicate to range query processing. We will first show that a range query in the original space can be transformed to a polyhedron query in the perturbed space, and then we develop a secure way to do the query transformation. Then, we will develop a two-stage query processing strategy for efficient range query processing.

\subsection{Transforming Range Queries}
Let's look at the general form of a range query condition. Let $X_i$ be an attribute in the database. A simple condition in a range query involves only one attribute and is of the form ``\emph{$X_i$ $<$op$>$ $a_i$}'', where $a_i$ is a constant in the normalized domain of $X_i$ and $op\in \{ <,>,=,\leq,\geq,\neq\}$ is a comparison operator. For convenience we will only discuss how to process $X_i< a_i$, while the proposed method can be slightly changed for other conditions. Any complicated range query can be transformed into the disjunction of a set of conjunctions, i.e., $\bigcup_{j=1}^n(\bigcap_{i=1}^mC_{i,j})$, where $m,n$ are some integers depending on the original query conditions and $C_{i,j}$ is a simple condition about $X_i$. Again, to simplify the presentation we restrict our discussion to a single conjunction condition $\cap_{i=1}^mC_{i}$, where $C_i$ is in form of $b_i\leq X_i\leq a_i$. Such a conjunction conditions describes a hyper-cubic area in the multidimensional space.

According to the three nested transformations in RASP $F(G(E_{ope}(x)))$, we will first show that an OPE will transform the original hyper-cubic area to another hyper-cubic area in the OPE space.
\begin{myPro}
Order preserving encryption functions transform a hyper-cubic query range to another hyper-cubic query range.
\end{myPro}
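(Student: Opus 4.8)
The plan is to show that an order preserving encryption applied coordinate-wise to a hyper-cubic region yields another hyper-cubic region. Recall that the conjunction condition $\cap_{i=1}^m C_i$ with each $C_i$ of the form $b_i \leq X_i \leq a_i$ describes a hyper-cubic area: it is the Cartesian product of the per-dimension intervals $[b_i, a_i]$ (with the remaining unconstrained dimensions spanning their full normalized domains). The key structural fact I would exploit is that $E_{ope}$ acts \emph{independently on each dimension}: by the definition in Section~\ref{sec:RASP}, the OPE scheme is applied separately to each coordinate of $x$, so it is a product map $E_{ope}(x) = (e_1(x_1), \ldots, e_d(x_d))$ where each $e_i$ is a one-dimensional order preserving encryption function.

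First I would observe that because each $e_i$ is \emph{strictly monotone increasing} (this is exactly the order preserving property), it maps an interval to an interval: the condition $b_i \leq X_i \leq a_i$ is equivalent, after applying $e_i$, to $e_i(b_i) \leq e_i(X_i) \leq e_i(a_i)$. Here I would use monotonicity in both directions --- that $x_i \leq a_i$ implies $e_i(x_i) \leq e_i(a_i)$, and conversely that $e_i(x_i) \leq e_i(a_i)$ implies $x_i \leq a_i$ (the latter using that $e_i$ is injective with monotone inverse). So the image of the per-dimension interval under $e_i$ is again an interval, with endpoints $e_i(b_i)$ and $e_i(a_i)$.

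Next I would assemble the dimensions. Since the original region is a Cartesian product over dimensions and $E_{ope}$ is a product map, the image is the product of the images in each dimension. Each image is an interval $[e_i(b_i), e_i(a_i)]$ by the previous step, so the overall image is $\prod_{i=1}^m [e_i(b_i), e_i(a_i)]$ intersected with the encrypted full domains on the unconstrained dimensions --- precisely a new hyper-cubic region in the OPE space. Concretely, the transformed query is the conjunction $\cap_{i=1}^m C_i'$ where $C_i'$ is $e_i(b_i) \leq E_{ope}(X)_i \leq e_i(a_i)$, which is exactly the form of a hyper-cubic query range.

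I do not anticipate a genuine obstacle here; the result is essentially a restatement of order preservation combined with the separability of the transformation across dimensions. The only point requiring a little care is making explicit that order preservation gives an \emph{equivalence} between the original and transformed conditions (so that the query result set is exactly preserved, not merely contained), which relies on $e_i$ being a strictly monotone bijection onto its range rather than merely monotone. I would state this monotone-bijection property up front and let the interval-to-interval and product-to-product arguments follow routinely.
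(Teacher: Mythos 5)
Your proof is correct and follows essentially the same route as the paper's: both rest on the observation that $E_{ope}$ acts coordinate-wise and order preservation turns each simple condition $b_i \leq X_i \leq a_i$ into $E_{ope}(b_i)\leq E_{ope}(X_i)\leq E_{ope}(a_i)$, so the conjunction remains a hyper-cube. Your version is merely more explicit about strict monotonicity giving an equivalence (not just containment) and about the Cartesian-product structure, which the paper leaves implicit.
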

\begin{proof}
The original range query condition consists of simple conditions like $b_i\leq X_i \leq a_i$ for each dimension. Since the order is preserved, each simple condition is transformed as follows: $E_{ope}(b_i)\leq E_{ope}(X_i)\leq E_{ope}(a_i)$, which means the transformed range is still a hyper-cubic query range.
\end{proof}

\begin{figure}[tbh]
\centering
\begin{minipage}{0.45\linewidth}
\centering
\includegraphics[width=\linewidth]{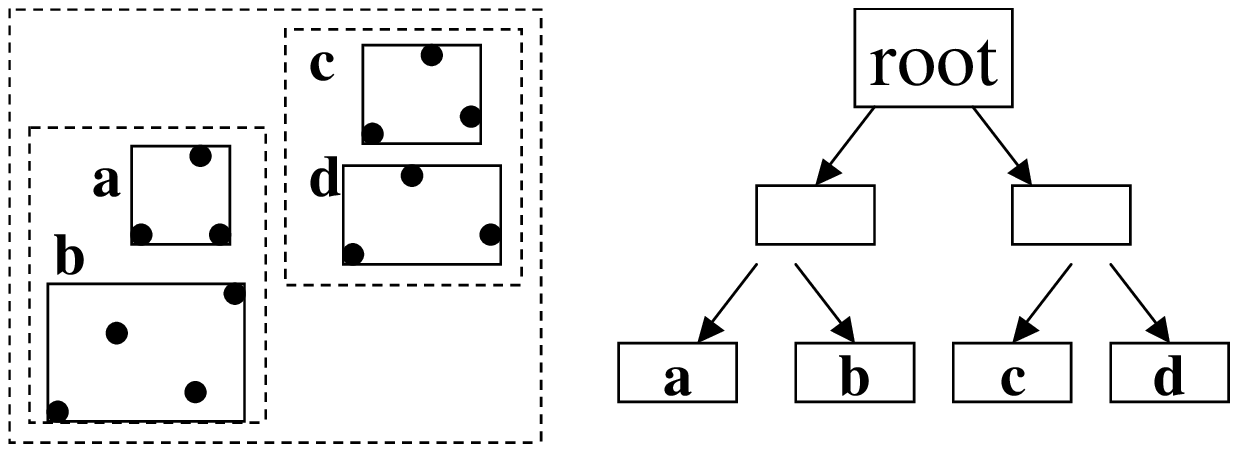}
\caption{R-tree index.}\label{fig:rtree}
\end{minipage}
\begin{minipage}{0.45\linewidth}
\centering
\includegraphics[width=\linewidth]{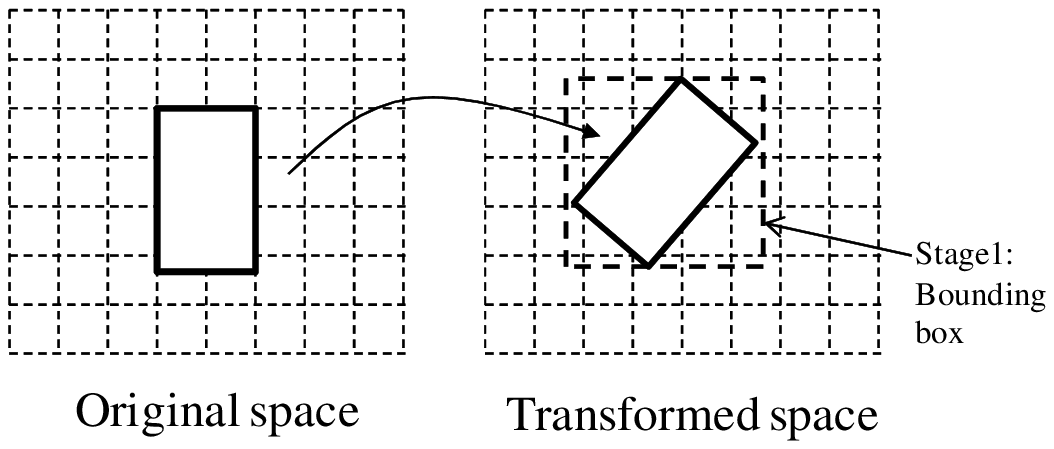}
\caption{Illustration of the two-stage processing algorithm.}\label{fig:two-stage}
\end{minipage}
\end{figure}

Let $y=E_{ope}(x)$ and $c_i =E_{ope}(a_i)$. A simple condition $Y_i\leq c_i$ defines a half-space. With the extended dimensions $z^T=(y^T, 1, v)$,  the half-space can be represented as $w^Tz\leq 0$, where $w$ is a $d+2$ dimensional vector with $w_i=1, w_{d+1}=-c_i$, and $w_{j} = 0$ for $j\neq i,d+1$. Finally, let $u = Az$, according to the RASP transformations. With this representation, the original condition is equivalent to
\begin{equation} \label{eq:q1}
w^TA^{-1}u\leq 0
\end{equation}
in the RASP-perturbed space, which is still a half-space condition. However, this half-space condition will not be parallel to the coordinate - these transformed conditions together form a polyhedron (as illustrated in Figure \ref{fig:two-stage}. The query service will need to find the records in the polyhedron area, which is supported by the two-stage processing algorithm.

\subsection{Security Enhancement on Query Transformation}\label{sec:QuerySecurity}
The attacker may also target on the transformed queries. In this section we discuss such attacks and describe the methods countering the attacks. Note that the attack on small ranges will be described in kNN query processing.

\noindent\textbf{Countering Dimensional Selection Attack}
We show that the dimensional selection attack can reveal partial information of the selected data dimensions, if the attacker knows the distribution of the dimension. Assume the query condition is applied to the $i$-th dimension. If the query parameter $w^TA^{-1}$ is directly submitted to the cloud side, the server can apply $w^TA^{-1}$ to each record $u$ in the server, and get $w^TA^{-1}u = E_{ope}(x_i) - E_{ope}(a_i)$, where $x_i$ is the i-th dimension of the corresponding original record $x$. After getting all such values for the dimension $i$, with the known original data distributions, the attacker can apply the bucket-based distributional attack on the OPE encrypted data (see Section \ref{sec:related-work}) to get an accurate estimate.

According to the design of noise, the extended $(d+2)$-th dimension $v$ in the RASP perturbation: $F(x) = A(E_{ope}(x)^T, 1,v)^T$ is always greater than $v_0$, which can be used to construct secure query conditions. Instead of processing a half space condition $E_{ope}(X_i)\leq E_{ope}(a_i)$, we use $(E_{ope}(X_i)-E_{ope}(a_i)) (v-v_0) \leq 0$ instead. These two conditions are equivalent because $v$ always satisfies $v>v_0$. Using the similar transformations, we get $E_{ope}(X_i) -E_{ope}(a_i) = w^TA^{-1}u$ and $v = \mathbf{q}^TA^{-1}u$, where $q_{d+2}=-1$,  $q_{d+1}=v_0$, and $q_j=0$, for $j\neq d$. Thus, we get the transformed quadratic query condition
\begin{equation}\label{eq:qt}
  u^T(A^{-1})^Twq^TA^{-1}u \leq 0.
\end{equation}
Let $\Theta_i = (A^{-1})^Twq^TA^{-1}$. Now $\Theta$ is submitted to the server and the server will use $u^T\Theta_i u\leq 0$  to filter out the results.

We now show that this query transformation is resilient to the dimensional selection attack. Applying $u^T\Theta u$ to each record $u$, we get $(E_{ope}(X_i)-E_{ope}(a_i))(v-v_0)$. Since $v$ is randomly chosen for each record, the value $E_{ope}(X_i)-E_{ope}(a_i)$ is protected by the randomization.
$\Theta_i$ does not reveal the key parameters as well. Let $c_i= E_{ope}(a_i)$ and $a_{i}$ be the $i$-th row of $A^{-1}$. $\Theta_i$ is $(a_i-c_ia_{d+1})^T(v_0a_{d+1}-a_{d+2})$. As all the components: $a_i, c_i, a_{d+1}$, and $a_{d+2}$ are unknown and cannot be further reduced, $\Theta_i$ provide no information to help drive information about $A^{-1}$.

\noindent\textbf{Other Potential Threats.}
Because the query transformation method does not introduce randomness - the same query will always get the same transformation, and thus the confidentiality of access pattern is not preserved. We summarize the leaked information related to access patterns as follows.
\begin{itemize}
\item Attackers know the exact frequency of each transformed query.
\item The set relationships (set intersection, union, difference, etc.) between the query results are revealed as a result of exact range query processing.
\item Some query matrices on the same dimension may have special relationship preserved as shown in Proposition \ref{prop:theta}, which we will discuss later.
\end{itemize}
We admit this is a weakness of the current design. However, according to the threat model, the adversary will not know any of the original data and queries. Thus, by simply observing the query frequency or relationships between queries, one cannot derive useful information. An important future work is to formally define the specific information leakage caused by the leaked query and access patterns, and then precisely analyze the data and query confidentiality affected by this information leakage under different security assumptions.

\subsection{A Two-Stage Query Processing Strategy with Multidimensional Index Tree}
With the transformed queries, the next important task is to process queries efficiently and return precise results to minimize the client-side post-processing effects. A commonly used method is to use multidimensional tree indices to improve the search performance. However, multidimensional tree indices are normally used to process axis-aligned ``bounding boxes''; whereas, the transformed queries are in arbitrary polyhedra, not necessarily aligned to axes. In this section, we propose a two-stage query processing strategy to handle such irregular-shape queries in the perturbed space. 

\noindent\textbf{Multidimensional Index Tree. } Most multidimensional indexing algorithms are derived from R-tree like algorithms \cite{rtreebook}, where the axis-aligned minimum bounding region (MBR) is the construction block for indexing the multidimensional data. For 2D data, an MBR is a rectangle. For higher dimensions, the shape of MBR is extended to hyper-cube. Figure \ref{fig:rtree} shows the MBRs in the R-tree for a 2D dataset, where each node is bounded by a node MBR. The R-tree range query algorithm compares the MBR and the queried range to find the answers.


\noindent\textbf{The Two-Stage Processing Algorithm. }
%
The transformed query describes a polyhedron in the perturbed space that cannot be directly processed by multidimensional tree algorithms. New tree search algorithms could be designed to use arbitrary polyhedron conditions directly for search. However, we use a simpler two-stage solution that keeps the existing tree search algorithms unchanged.

At the first stage, the proxy in the client side finds the MBR of the polyhedron (as a part of the submitted transformed query) and submit the MBR and a set of secured query conditions $\{\Theta_1,\ldots,\Theta_m\}$ to the server. The server then uses the tree index to find the set of records enclosed by the MBR.

The MBR of the polyhedron can be efficiently founded based on the original range. The original query condition constructs a hyper-cube shape. With the described query transformation, the vertices of the hyper cube are also transformed to vertices of the polyhedron. Therefore, the MBR of the vertices is also the MBR of the polyhedron \cite{franco85}.
Figure \ref{fig:two-stage} illustrates the relationship between the vertices and the MBR and the two-stage processing strategy.


At the second stage, the server uses the transformed halfspace conditions to filter the initial result. In most cases of tight ranges, the initial result set will be reasonably small so that it can be filtered in memory by simply checking the transformed half-space conditions. However, in the worst case, the MBR of the polyhedron will possibly enclose the entire dataset and the second stage is reduced to a linear scan of the entire dataset. The result of second stage will return the \emph{exact} range query result to the proxy server, which significantly reduces the post-processing cost that the proxy server needs to take. It is very important to the cloud-based service, because low post-processing cost requires low in-house investment.


\section{KNN Query Processing with RASP}\label{sec:kNN-R}
Because the RASP perturbation does not preserve distances (and distance orders), kNN query cannot be directly processed with the RASP perturbed data. In this section, we design a kNN query processing algorithm based on range queries (the kNN-R algorithm). As a result, the use of index in range query processing also enables fast processing of kNN queries.

\subsection{Overview of the kNN-R Algorithm}
The original distance-based kNN query processing finds the nearest $k$ points in the \emph{spherical range} that is centered at the query point. The basic idea of our algorithm is to use square ranges, instead of spherical ranges, to find the approximate kNN results, so that the RASP range query service can be used. There are a number of key problems to make this work securely and efficiently. (1) How to efficiently find the minimum square range that surely contains the k results, without many interactions between the cloud and the client? (2) Will this solution preserve data confidentiality and query privacy? (3) Will the proxy server's workload increase? to what extent?

\begin{figure*}
\centering
\begin{minipage}{0.45\linewidth}
\centering
\includegraphics[width =\linewidth]{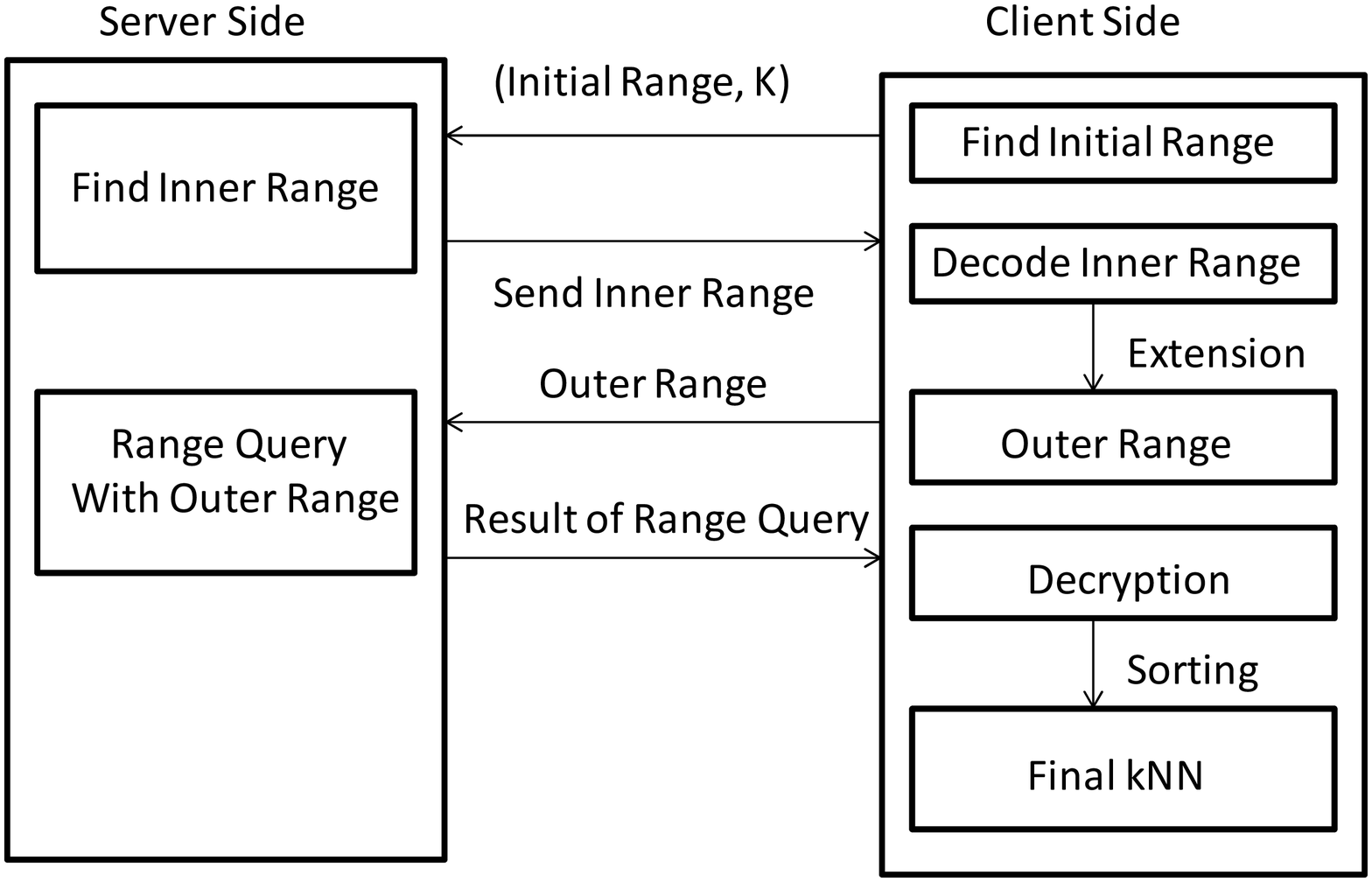}
\caption{Procedure of KNN-R algorithm}\label{fig:kNN-R-overview}
\end{minipage}
\hspace{0.1\linewidth}
\begin{minipage}{0.35\linewidth}
\centering
\includegraphics[width =\linewidth]{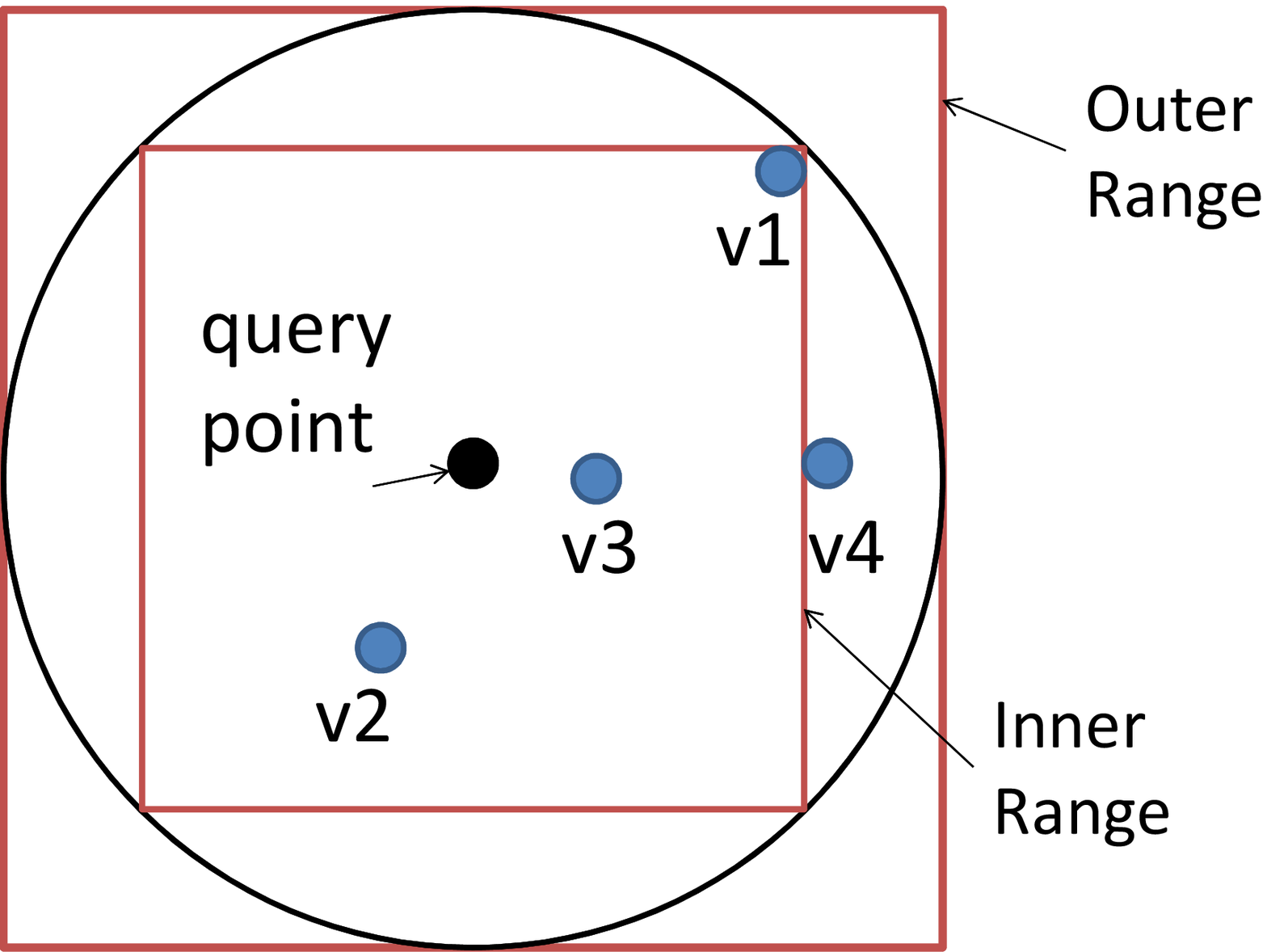}
\caption{Illustration for kNN-R Algorithm when k=3}\label{fig:kNN-R-K3}
\end{minipage}
\end{figure*}

The algorithm is based on \emph{square ranges} to approximately find the kNN candidates for a query point, which are defined as follows.
\begin{myDef}
A square range is a hyper-cube that is centered at the query point and with equal-length edges.
\end{myDef}
Figure \ref{fig:kNN-R-K3} illustrates the range-query-based kNN processing with two-dimensional data. The \emph{Inner Range} is the square range that contains at least $k$ points, and the
\emph{Outer Range} encloses the spherical range that encloses the inner range. The outer range surely
contains the kNN results (Proposition \ref{prop:kNN-R}) but it may also contain irrelevant points
that need to be filtered out.
\begin{myPro}\label{prop:kNN-R}
The kNN-R algorithm returns results with 100\% recall.
\end{myPro}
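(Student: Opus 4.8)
The plan is to reduce the recall guarantee to a purely geometric containment argument, relying on the single fact guaranteed by the construction of the algorithm: the \emph{inner range} is a hypercube centered at the query point $q$ that contains at least $k$ data points. Let $r$ denote the half-edge length of this inner range in the $d$-dimensional original space, so that the inner range is $\{x : \|x-q\|_{\infty} \le r\}$. Recall that the true kNN answer is the set of the $k$ points with smallest Euclidean distance to $q$; I will show that this entire set is contained in the outer range that the algorithm scans.

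First I would establish a bound on the $k$-th nearest neighbor distance. Every point $x$ lying inside the inner range satisfies $\|x-q\|_{2} \le r\sqrt{d}$, since the points of a hypercube farthest from its center are its corners, at distance equal to the half-diagonal $r\sqrt{d}$. Because the inner range already contains at least $k$ points, there are at least $k$ data points within Euclidean distance $r\sqrt{d}$ of $q$. Letting $d_k$ denote the distance from $q$ to its genuine $k$-th nearest neighbor, it follows immediately that $d_k \le r\sqrt{d}$.

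Next I would convert this bound into the desired containment. Each of the $k$ true nearest neighbors lies at Euclidean distance at most $d_k \le r\sqrt{d}$ from $q$, so the entire true kNN set is contained in the closed sphere of radius $r\sqrt{d}$ centered at $q$. This sphere is precisely the \emph{spherical range that encloses the inner range}, since its radius equals the circumradius of the inner cube, and by definition the \emph{outer range} is the hypercube enclosing this sphere (the cube with half-edge $r\sqrt{d}$). Chaining the inclusions gives: true kNN set $\subseteq$ circumscribed sphere of the inner range $\subseteq$ outer range. Because the kNN-R algorithm retrieves \emph{every} data point falling in the outer range before any final distance-based filtering, no genuine nearest neighbor can be discarded, which yields recall equal to $100\%$.

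The argument is essentially containment, so it has no genuinely hard step; the only place requiring care is the passage $d_k \le r\sqrt{d}$. One must invoke the correct cube-to-sphere conversion factor $\sqrt{d}$ (the circumradius $r\sqrt{d}$, not the inscribed-sphere radius $r$), and verify from the definitions that the outer range is built to enclose the circumscribed sphere of the inner range rather than the inner cube itself; using the smaller inscribed sphere would break the argument. I would also remark on ties at distance exactly $d_k$: any admissible kNN answer still lies in the \emph{closed} sphere of radius $d_k \le r\sqrt{d}$, so the containment, and hence the $100\%$ recall, holds regardless of how such ties are resolved.
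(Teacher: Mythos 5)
Your argument is correct and is essentially the paper's own proof: both show that the circumscribed sphere of the inner range contains at least $k$ points (hence the true $k$ nearest neighbors), and that the outer range is defined to enclose that sphere. You merely make explicit the $r\sqrt{d}$ circumradius (the paper calls this radius $r$ directly) and add a harmless remark about ties.
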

\begin{proof}
 The sphere in Figure \ref{fig:kNN-R-K3} between the outer range and the inner range covers all points with distances less than the radius $r$. Because the inner range contains at least $k$ points, there are at least $k$ nearest neighbors to the query points with distances less than the radius $r$. Therefore, the $k$ nearest neighbors must be in the outer range.
\end{proof}

The kNN-R algorithm consists of two rounds of interactions between the client and the server. Figure \ref{fig:kNN-R-overview} demonstrates the procedure. (1) The client will send the initial upper-bound range, which contains more than $k$ points, and the initial lower-bound range, which contains less than $k$ points, to the server. The server finds the inner range and returns to the client. (2) The client calculates the outer range based on the inner range and sends it back to the server. The server finds the records in the outer range and sends them to the client. (3) The client decrypts the records and find the top $k$ candidates as the final result.


If the points are approximately uniformly distributed, we can estimate the precision of the returned result. With the uniform assumption, the number of points in an area is proportional to the size of the area. If the inner range contains $m$ points, $m>=k$, the outer range contains $q$ points, and the dimensionality is $d$, we can derive
$q = 2^{d/2}m$.
Thus, the precision is $k/q = k/(2^{d/2}m)$. If $m\approx k$ and $d=2$, the precision is around 0.5. When $d$ increases, the precision decreases exponentially due to the curse of dimensionality \cite{marimont79}, which suggests kNN-R should not work effectively on high-dimensional data. We will show this weakness in experiments.


\subsection{Finding Compact Inner Square Range}
An important step in the kNN-R algorithm is to find the compact inner square range to achieve high precision. In the following, we give the $(k, \delta)$-range for efficiently finding the compact inner range.
\begin{myDef}
A \emph{$(k, \delta)$-range} is any square range centered at the query point, the number of points in which is in the range $[k, k+\delta]$, $\delta$ is a nonnegative integer.
\end{myDef}

We design an algorithm similar to binary search to efficiently find the $(k, \delta)$-range. Suppose a square range centered at the query point with length of $L$ in each dimension is represented as $S^{(L)}$. Let the number of points included by this range is $N^{(L)}$. If a square range $S^{(in)}$ is enclosed by another square range $S^{(out)}$, we say $S^{(in)} \subset S^{(out)}$. It directly follows that $N^{(in)}\leq N^{(out)}$, and also
\begin{corollary} \label{cor:enclose}
If $N^{(1)}< N^{(2)}$,  $S^{(1)} \subset S^{(2)}$.
\end{corollary}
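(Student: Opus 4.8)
The plan is to exploit the single structural fact that makes the statement true: because every square range in the algorithm is a hyper-cube centered at the \emph{same} query point, the family of such ranges is totally ordered by inclusion, and that order is governed entirely by the edge length. So first I would make this totality precise. Writing $q$ for the query point, a point $x$ lies in $S^{(L)}$ exactly when $\max_j |x_j - q_j| \le L/2$. From this coordinate-wise description it is immediate that $S^{(L_1)} \subseteq S^{(L_2)}$ iff $L_1 \le L_2$; in particular, for any two ranges $S^{(1)}$ and $S^{(2)}$ with edge lengths $L_1, L_2$, either $S^{(1)} \subseteq S^{(2)}$ or $S^{(2)} \subseteq S^{(1)}$, and the two ranges coincide precisely when $L_1 = L_2$.

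With this in hand, the corollary follows by a short contrapositive argument that reuses the monotonicity already established just above the statement, namely $S^{(\mathrm{in})} \subset S^{(\mathrm{out})} \Rightarrow N^{(\mathrm{in})} \le N^{(\mathrm{out})}$. I would assume $N^{(1)} < N^{(2)}$ and suppose, for contradiction, that $S^{(1)} \not\subset S^{(2)}$. By the totality just proved, this forces $S^{(2)} \subseteq S^{(1)}$, whence $N^{(2)} \le N^{(1)}$, contradicting the hypothesis. Therefore $S^{(1)} \subset S^{(2)}$. I would also note that the strict inequality $N^{(1)} < N^{(2)}$ rules out $L_1 = L_2$ (equal cubes enclose equal point counts), so the inclusion is in fact proper, consistent with the symbol $\subset$.

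The argument is essentially logical rather than computational, so there is no real analytic obstacle. The one place that deserves care is the totality claim: it is exactly where the common center is used, and it would fail for cubes with differing centers. I would therefore state the concentricity hypothesis explicitly and derive totality from the $\max_j |x_j - q_j| \le L/2$ characterization rather than treating it as self-evident. The only remaining subtlety is the distinction between $\subseteq$ and $\subset$, which I would resolve exactly as above by observing that equal counts cannot occur under the strict hypothesis.
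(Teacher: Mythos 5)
Your proof is correct, and it matches the paper's (implicit) reasoning: the paper offers no proof at all, merely asserting that the corollary ``directly follows'' from the count-monotonicity of enclosed ranges, and your argument --- totality of the inclusion order on concentric hyper-cubes plus the contrapositive of that monotonicity --- is exactly the reasoning being left unstated. Your explicit attention to the role of the common center and to the $\subseteq$ versus $\subset$ distinction is a genuine improvement in rigor but not a different approach.
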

Using this definition and notation, we can always construct a series of enclosed square ranges centered on the query point: $S^{(L_{1})}\subset S^{(L_{2})}\subset \ldots, \subset S^{(L_{m})}$. Correspondingly, the numbers of points enclosed by $\{S^{(L_i)}\}$ have the ordering $N^{(L_1)}\leq N^{(L_2)}\leq \ldots N^{(L_m)}$.
Assume that $S^{(L_{1})}$ is the initial range containing less than $k$ points and $S^{(L_{m})}$ is the initial upper bound range; both are sent by the client. The problem of finding the compact inner range $S$ can be mapped to a binary search over the sequence $\{S^{(L_i)}\}$.

In each step of the binary search, we start with a lower bound range, denoted as $S^{(low)}$ and a higher bound range, $S^{(high)}$. We want the corresponding numbers of enclosed points to satisfy
$N^{(low)} < k \leq N^{(high)}$ in each step, which is achieved with the following procedure.
First, we find the middle square range $S^{(mid)}$, where $mid = (low + high)/2$. If $S^{(mid)}$ covers no less than $k$ points, the higher bound: $S^{(high)}$ is updated to $S^{(mid)}$; otherwise, the lower bound: $S^{(low)}$ is updated to $S^{(mid)}$. At the beginning step $S^{(low)}$ is set to $S^{(L_{1})}$ and $S^{(high)}$ is $S^{(L_{m})}$. This process repeats until $N^{(mid)}<k+\delta$ or $high - low < \mathcal{E}$, where $\mathcal{E}$ is some small positive number. Algorithm \ref{alg:k-delta} in Appendix describes these steps.

\textbf{Selection of Initial Inner/Outer Bounds.}
The selection of initial inner bound can be the query point. If the query point is $q(q_1,\ldots, q_d)$, $S^{(L_{1})}$ is a hyper-cube defined by $\{ q_i \geq X_i\geq q_i, i=1\ldots d\}$. The naive selection of $S^{(L_m)}$ would be the whole domain. However, we can effectively reduce the range with a coarse density map organized in a tiny flat multidimensional tree, which can be included in the preprocessing step in the client side. The details will be ignored due to the space limitation.

\subsection{Finding Inner Range with RASP Perturbed Data}
Algorithm \ref{alg:k-delta} gives the basic ideas of finding the compact inner range in iterations. There are two critical operations in this algorithm: (1) finding the number of points in a square range and (2) updating the higher and lower bounds. Because range queries are secured in the RASP framework, the key is to update the bounds with the secured range queries, without the help of the client-side proxy server.

As discussed in the RASP query processing, a range query such as $S^{(L)}$ is encoded as the MBR$^{(L)}$ of its  polyhedron range in the perturbed space and the $2(d+2)$ dimensional conditions. $y^T\Theta_i^{(L)} y\leq 0$ determining the sides of the polyhedron, and each of the $d+2$ extended dimensions gets a pair of conditions for the upper and lower bounds, respectively.

The problem of binary range search is to use the higher bound range $S^{(high)}$ and the lower bound range $S^{(low)}$ to derive $S^{(mid)}$. When all of these ranges are secured, the problem is transformed to (1) deriving $\Theta_i^{(mid)}$ from $\Theta_i^{(high)}$ and $\Theta_i^{(low)}$; and (2) deriving MBR$^{(mid)}$ from MBR$^{(high)}$ and MBR$^{(low)}$.
The following discussion will be focused on the simplified RASP version without the OPE component, which will be extended with the OPE component.

We show that
\begin{myPro} \label{prop:theta}
\begin{equation*}
(\Theta_i^{(high)} + \Theta_{i}^{(low)})/2 = \Theta_{i}^{(mid)}.
\end{equation*}
\end{myPro}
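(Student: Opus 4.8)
The plan is to exhibit each query matrix $\Theta_i^{(L)}$ as an \emph{affine} function of the square-range edge length $L$, and then invoke the defining relation of the binary search, $L_{mid}=(L_{low}+L_{high})/2$, together with the elementary fact that an affine map sends a midpoint to the midpoint of the images. I work in the simplified RASP version (no OPE), and I fix the common query point with $i$-th coordinate $q_i$ and the shared matrix $A$; these are identical for $S^{(low)},S^{(mid)},S^{(high)}$, since $A$ is a per-database key and all three ranges are centered at the same query point. Hence the only quantity that changes across the three ranges is $L$.

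First I would unwind the definition $\Theta_i=(A^{-1})^Tw\,\mathbf{q}^TA^{-1}$ and track how its ingredients depend on $L$. The vector $\mathbf{q}$ (whose only nonzero entries, the $(d+1)$-th $=v_0$ and the $(d+2)$-th $=-1$, are fixed) and the matrix $A^{-1}$ are completely independent of $L$. Only $w$ carries the query threshold: for the upper-bound side on dimension $i$ the half-space is $X_i\le a_i$ with $a_i=q_i+L/2$, so $w=w(L)$ has entries $w_i=1$, $w_{d+1}=-(q_i+L/2)$ and zeros elsewhere (the lower-bound side is handled identically with threshold $q_i-L/2$). Writing $e_{d+1}$ for the $(d+1)$-th standard basis vector, we get $w(L)=w_0-\tfrac{L}{2}e_{d+1}$ with $w_0$ independent of $L$; that is, $w(L)$ is affine in $L$.

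Next I would push this affinity through the construction of $\Theta_i$. Since the map $w\mapsto (A^{-1})^Tw\,\mathbf{q}^TA^{-1}$ is linear in $w$ (with $A$ and $\mathbf{q}$ held fixed), substituting $w(L)$ gives
\begin{equation*}
\Theta_i^{(L)}=(A^{-1})^Tw_0\,\mathbf{q}^TA^{-1}-\tfrac{L}{2}\,(A^{-1})^Te_{d+1}\,\mathbf{q}^TA^{-1}=:M_0-\tfrac{L}{2}M_1,
\end{equation*}
so $\Theta_i^{(L)}$ is an affine function of $L$ with constant matrices $M_0,M_1$ shared by all three ranges. Finally, using $L_{mid}=(L_{low}+L_{high})/2$,
\begin{equation*}
\frac{\Theta_i^{(high)}+\Theta_i^{(low)}}{2}=M_0-\frac{L_{high}+L_{low}}{4}M_1=M_0-\frac{L_{mid}}{2}M_1=\Theta_i^{(mid)},
\end{equation*}
which is exactly the claim.

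The bulk of the work here is bookkeeping rather than a genuine obstacle: the one point that must be verified carefully is that $A^{-1}$, $\mathbf{q}$, and the center $q_i$ are truly \emph{common} to $S^{(low)},S^{(mid)},S^{(high)}$, so that the \emph{only} moving part is the scalar threshold, which enters $w$ — and therefore $\Theta_i$ — linearly. If one wants the full RASP version, the same argument goes through after replacing $q_i\pm L/2$ by $E_{ope}(q_i\pm L/2)$; the mild subtlety there is that $E_{ope}$ need not be affine, so the clean midpoint identity cannot be stated in $L$ but only at the level of the (already transformed) thresholds, i.e. in the OPE-image coordinates in which the server actually operates.
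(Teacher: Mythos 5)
Your proposal is correct and uses essentially the same argument as the paper: both proofs reduce to the observation that $\Theta_i$ depends affinely on the query threshold (the paper writes $\Theta_i=(a_i-c_ia_{d+1})^T(v_0a_{d+1}-a_{d+2})$ and averages over $c_i$; you parametrize by the edge length $L$, which is equivalent since the threshold $q_i\pm L/2$ is affine in $L$). Your closing remark on the OPE case also matches the paper's subsequent ``between''-point discussion, so there is nothing to correct.
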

\begin{proof}
Remember that $\Theta_{i}$ for $X_i<c_i$ can be represented as $(a_i-c_ia_{d+1})^T(v_0a_{d+1}-a_{d+2})$, where $a_i$ is the $i$-th row of the matrix $A$. Let the conditions be $X_i<h$, $X_i<l$, and $X_i<(h+l)/2$ for the high, low, and middle bounds, correspondingly.
Thus, $(\Theta_i^{(high)} + \Theta_{i}^{(low)})/2 = (a_i - ((h+l)/2)a_{d+1})^T(v_0a_{d+1}-a_{d+2})$, which is $\Theta_{i}^{(mid)}$.
\end{proof}

As we have mentioned, the MBR of an arbitrary polyhedron can be derived based on the vertices of the polyhedron. A polyhedron is mapped to another polyhedron after the RASP perturbation. Concretely, let a polyhedron $P$ has $m$ vertices $\{x_1,\ldots,x_m\}$, which are mapped to the vertices in the perturbed space: $\{y_1,\ldots,y_m\}$. Then, the upper bound and lower bound of dimension  $j$ of the $MBR$ of the polyhedron in the perturbed space are  determined by $\max\{ y_{ij}, i=1\ldots m\}$ and $\min\{y_{ij}, i=1\ldots m\}$, respectively.

Let the j-th dimension of $MBR^{(L)}$ represented as $[s_{j,min}^{(L)}, s_{j,max}^{(L)}]$, where $s_{j,min}^{(L)}$ $= \min\{ y_{ij}^{(L)}, i=1\ldots m\}$, and $s_{j,max}^{(L)} = \max\{ y_{ij}^{(high)}, i=1\ldots m\}$. Now we choose the $MBR^{(MID)}$ as follows: for j-th dimension we use $[(s_{j,min}^{(low)}+s_{j,min}^{(high)})/2, (s_{j,max}^{(low)}+s_{j,max}^{(high)})/2]$. We show that
\begin{myPro}
$MBR^{(MID)}$ encloses $MBR^{(mid)}$.
\end{myPro}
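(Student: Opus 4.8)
The plan is to parametrize both mid-ranges by the common edge length $L$ and to exploit the fact that, in the simplified RASP without OPE, the whole perturbation is the single affine map $y=Az$ with $z=(x^T,1,v)^T$. First I would describe $MBR^{(L)}$ concretely. The square range $S^{(L)}$ centered at the query point $q=(q_1,\ldots,q_d)$ has vertices $q_t\pm L/2$; extending each vertex with the homogeneous coordinate $1$ and letting the noise coordinate range over the endpoints of its interval $[v_0,\beta]$ produces a box in $\mathbb{R}^{d+2}$ whose $2^{d+1}$ vertices map, under $A$, to the vertices of the perturbed polyhedron. By the fact cited earlier that the MBR of a polyhedron equals the MBR of its transformed vertices, $s_{j,max}^{(L)}$ and $s_{j,min}^{(L)}$ are just the maximum and minimum of the $j$-th coordinate over these transformed vertices.

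The key step is to write the $j$-th coordinate of a transformed vertex explicitly. For a sign pattern $\epsilon\in\{-1,+1\}^d$ and a noise endpoint $w$ it equals $\sum_t a_{jt}q_t + a_{j,d+1} + (L/2)\sum_t a_{jt}\epsilon_t + a_{j,d+2}w$. Only the third term depends on $L$, and the sign pattern maximizing it, $\epsilon_t=\operatorname{sign}(a_{jt})$ (and likewise the extremal noise endpoint), is determined by the signs of the fixed entries of $A$ and is therefore independent of $L$. Hence $s_{j,max}^{(L)}=\alpha_j+(L/2)\gamma_j$ and $s_{j,min}^{(L)}=\alpha_j'-(L/2)\gamma_j$ with $\gamma_j=\sum_t|a_{jt}|\ge 0$ and $\alpha_j,\alpha_j'$ constants that do not involve $L$: each MBR bound is an \emph{affine} function of $L$.

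From here the conclusion is immediate. An affine function $f$ satisfies $f((L_{low}+L_{high})/2)=(f(L_{low})+f(L_{high}))/2$, so $s_{j,max}^{(mid)}=(s_{j,max}^{(low)}+s_{j,max}^{(high)})/2$ and $s_{j,min}^{(mid)}=(s_{j,min}^{(low)}+s_{j,min}^{(high)})/2$, which are exactly the coordinates defining $MBR^{(MID)}$. Thus $MBR^{(MID)}$ coincides with $MBR^{(mid)}$ in the no-OPE case, and in particular encloses it, which is the property needed for correctness of the binary search.

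I expect the genuine obstacle to be the extension to the full RASP with OPE. Once OPE is present, each varying vertex coordinate becomes $E_{ope}(q_t\pm L/2)$, a nonlinear function of $L$, so the bounds are no longer affine and the clean midpoint identity is lost. The nesting $S^{(low)}\subset S^{(mid)}\subset S^{(high)}$ (preserved by the order-preserving $E_{ope}$ and by the linear $A$) only gives $MBR^{(low)}\subset MBR^{(mid)}\subset MBR^{(high)}$, which orders the bounds but does not by itself place the \emph{average} of the endpoint bounds on the correct side of $s_{j,\cdot}^{(mid)}$. Recovering enclosure there seems to require a convexity-type control on $L\mapsto s_{j,max}^{(L)}$ (and concavity for the minimum), so arguing that this holds, or slightly inflating $MBR^{(MID)}$ to absorb the nonlinear error, is the delicate part of the argument.
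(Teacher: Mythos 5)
Your proof is correct for the proposition as stated (the no-OPE case), and it shares its backbone with the paper's argument: both reduce the MBR bounds to extrema of the $j$-th coordinate over the images of the hyper-cube vertices, and both use the linearity of $A$ to see that the vertices of $S^{(mid)}$ map to the componentwise averages of the corresponding images for $S^{(low)}$ and $S^{(high)}$. Where you diverge is the final step. The paper invokes the generic inequalities $\max_i a_i+\max_i b_i\geq\max_i(a_i+b_i)$ and $\min_i a_i+\min_i b_i\leq\min_i(a_i+b_i)$ to conclude $(s_{j,\max}^{(low)}+s_{j,\max}^{(high)})/2\geq s_{j,\max}^{(mid)}$ and its dual, i.e.\ enclosure, possibly strict. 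You instead compute the extremum explicitly, observe that the maximizing sign pattern $\epsilon_t=\operatorname{sign}(a_{jt})$ and the extremal noise endpoint are independent of $L$, and conclude that each bound is affine in $L$, so $MBR^{(MID)}$ \emph{equals} $MBR^{(mid)}$. That is a strictly stronger conclusion, and it also explains exactly when the paper's inequalities are tight (the maxima for the low and high vertex sets are attained at the same vertex). What the paper's weaker, more generic route buys is robustness: because it needs only the vertex-averaging property and not the explicit form of the extremizer, the same two inequalities carry over verbatim to the OPE-extended scheme. On that extension, your concern about losing affinity is legitimate only if one insists on averaging edge lengths in the original space; the paper sidesteps it by defining the ``between'' bound through $f_i(b)=(f_i(h)+f_i(l))/2$, i.e.\ averaging in the OPE-transformed coordinates, after which the vertices are again exact averages and the enclosure argument applies unchanged. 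So no inflation of $MBR^{(MID)}$ or convexity control is needed --- the right move is to redefine the midpoint in the transformed space, which is also what the server-side binary search actually computes.
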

The details of proof can be found in Appendix. Because the MBR is only used for the first stage of range query processing, a slightly larger MBR still encloses the polyhedron, which guarantees the correctness of the two-stage range query processing.

\textbf{Including the OPE component.}
The results on $\Theta_i^{(mid)}$ and MBR$^{(MID)}$ can be extended to the RASP scheme with the OPE component. However, due to the introduction of the order preserving function $f_i()$, the middle point may not be strictly the middle point, but somewhere between the higher bound and lower bound. We use ``between''(btw) to denote it.

Specifically, if $X_i<h$ and $X_i<l$ are the corresponding conditions for the higher and lower bounds. Let the condition for the ``between'' bound be $X_i<b$ that satisfies $f_i(b) = (f_i(h)+f_i(l))/2$. According to the OPE property, we have $l<b<h$, i.e., the corresponding range is still between the lower range and higher range. Therefore, the same binary search algorithm can still be applied, according to Corollary \ref{cor:enclose}. The server can also derive $(\Theta_i^{(high)} + \Theta_{i}^{(low)})/2 = (a_i - ((f_i(h)+f_i(l))/2)a_{d+1})^T(v_0a_{d+1}-a_{d+2}) = \Theta_i^{btw}$, a result similar to Proposition \ref{prop:theta}.

Similarly, we define MBR$^{(BTW)}$ with $f_i(s_{i,max}^{(BTW)}) = (f_i(s_{i,max}^{(low)})+f_i(s_{i,max}^{(high)}))/2$ and $f_i(s_{i, min}^{(BTW)}) = (f_i(s_{i,min}^{(low)})+f_i(s_{i,min}^{(high)}))/2$, while MBR$^{(btw)}$ is defined based on the vertices to be consistent with $\Theta_i^{(btw)}$. Because the relationships Eq. \ref{eq:max} and \ref{eq:min} in Appendix are still true with the OPE transformation $f_i()$, we can prove that MBR$^{(BTW)}$ also encloses MBR$^{(btw)}$. Due to the space limitation, we skip the details.

\subsection{Defining Initial Bounds} \label{sec:prep}
The complexity of the $(k,\delta)$-range algorithm is determined by the initial bounds provided by the client. Thus, it is important to
provide compact ones to help the server process queries more efficiently. The initial lower bound is defined as the query point. For $q(q_1,\ldots,q_d)$, the dimensional bounds are simply $q_j\leq X_j\leq q_j$.

The higher bounds can be defined in multiple ways. (1) Applications often have a user-specified interest bound, for example, returning the nearest gas station in 5 miles, which can be used to define the higher bound. (2) We can also use center-distance based bound setting. Let the query point has a distance $\gamma$ to the distribution center - as we always work on normalized distributions, the center is $(0,\ldots, 0)$. The upper bound is defined as $q_j-\epsilon\gamma\leq X_j\leq q_j+\epsilon\gamma$, where $epsilon \in (0, 1]$ defines the level of conservativity. (3) If it is really expected to include all candidate kNN regardless how distant they are, we can include a rough density-map (a multidimensional histgram) for quickly identifying the appropriate higher bound. However, this method works best for low dimensional data as the number of bins exponentially increases with the number of dimensions. In experiments, we simply use the method (1) and 5\% of the domain length for the extension.

\subsection{Security of kNN Queries}
As all kNN queries are completely transformed to range queries, the security of kNN queries are equivalent to the security of range queries. According to the previous discussion in Section \ref{sec:QuerySecurity}, the transformed range queries are secure under the assumptions. Therefore, the kNN queries are also secure. Detailed proofs have to be skipped for space limitation.

\section{Experiments}\label{sec:exp}
In this section, we present four sets of experimental results to investigate the following questions, correspondingly. (1) How expensive is the RASP perturbation? (2) How resilient the OPE enhanced RASP is to the ICA-based attack? (3) How efficient is the two-stage range query processing? (4) How efficient is the kNN-R query processing and what are the advantages?

\subsection{Datasets}
Three datasets are used in experiments. (1) A synthetic dataset that draws samples from uniform distribution in the range [0, 1]. (2) The Adult dataset from UCI machine learning database\footnote{http://archive.ics.uci.edu/ml/}. We assign numeric values to the categorical values using a simple one-to-one mapping scheme, as described in Section \ref{sec:RASP}. (3) The 2-dimensional NorthEast location data from rtreeportal.org.

\begin{figure*}[tbh]
\centering
\begin{minipage}{0.45\linewidth}
\centering
\includegraphics[width=\linewidth]{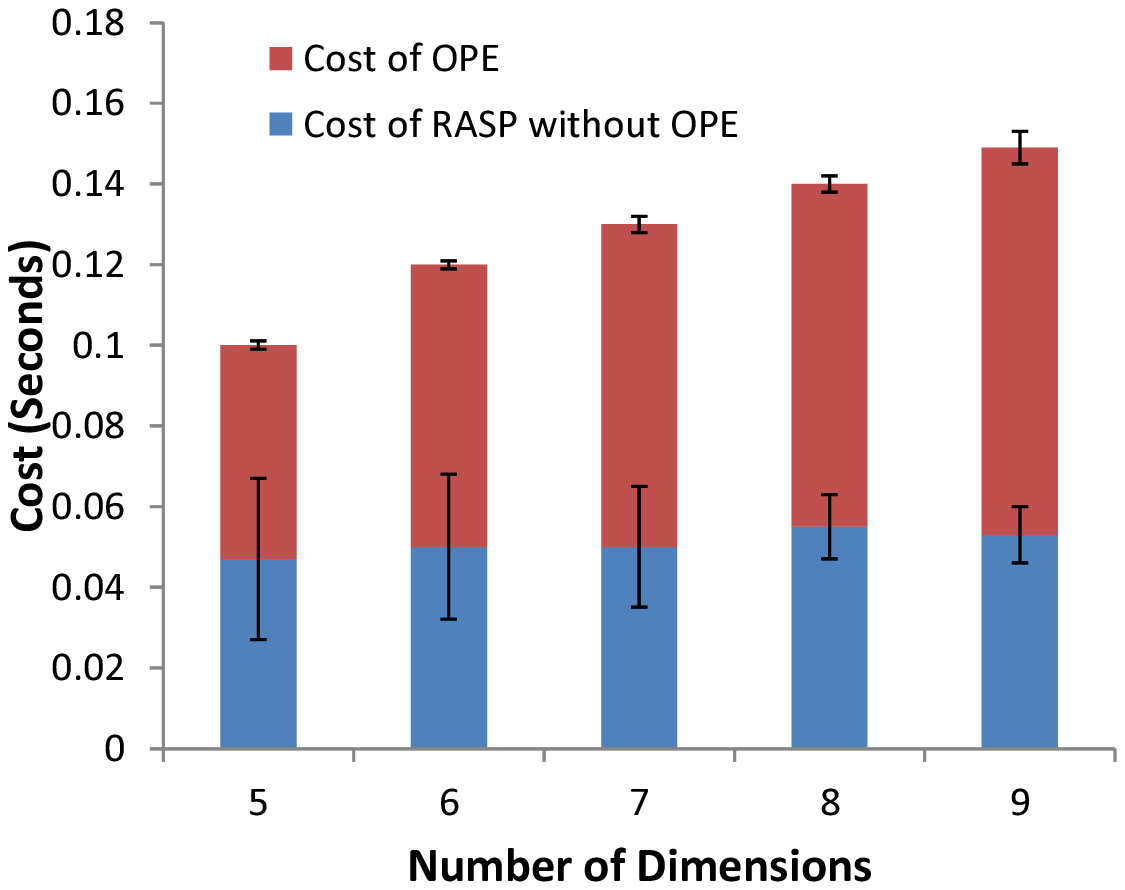}
\caption{The cost distribution of the full RASP scheme. Data: Adult (20K records,5-9 dimensions)}\label{fig:enc-cost}
\end{minipage}
\hspace{0.08\linewidth}
\begin{minipage}{0.45\linewidth}
\centering
\includegraphics[width=\linewidth]{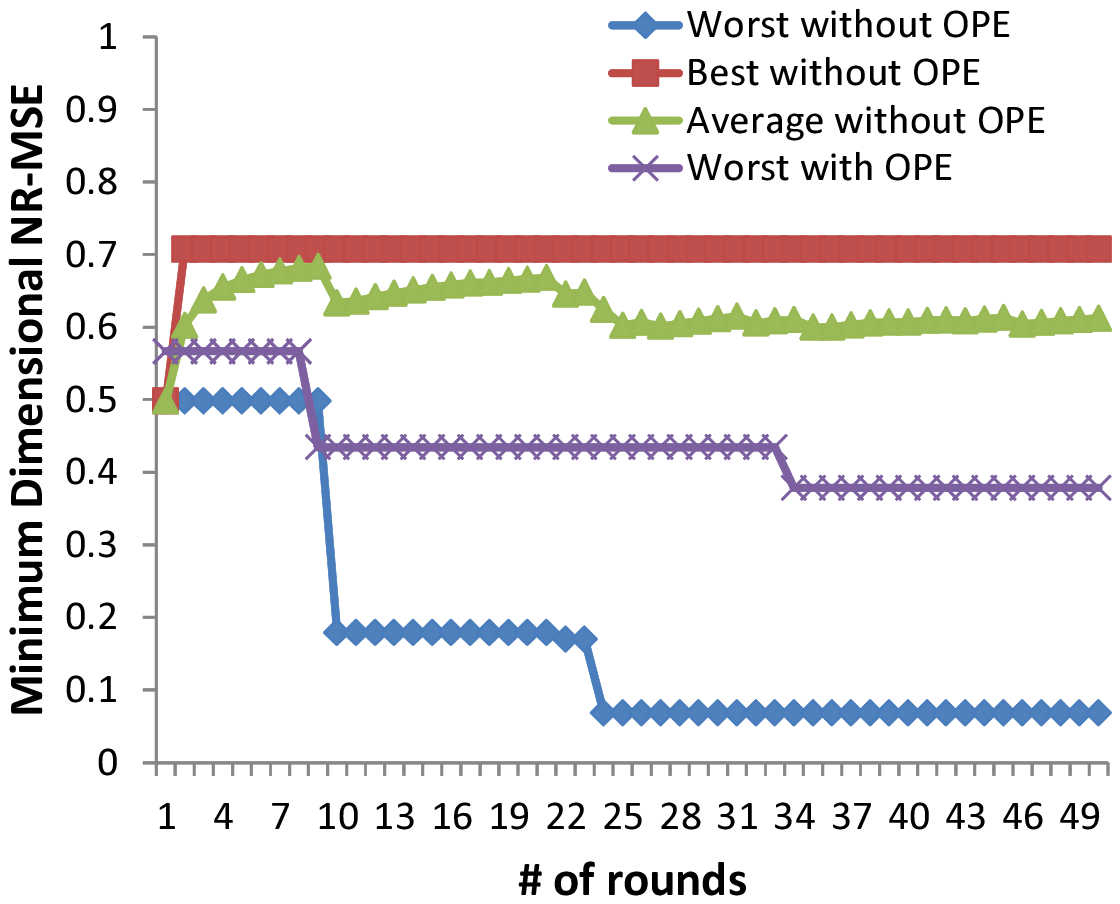}
\caption{Randomly generated matrix $A$ and the progressive resilience to ICA attack. Data: Adult (10 dimensions, 10K records)}\label{fig:adult-ica}
\end{minipage}
\end{figure*}

\subsection{Cost of RASP Perturbation}
In this experiment, we study the costs of the components in the RASP perturbation. The major costs can be divided into two parts: the OPE and the rest part of RASP. We implement a simple OPE scheme \cite{rakesh04} by mapping original column distributions to normal distributions. The OPE algorithm partitions the target distribution into buckets. Then, the sorted original values are proportionally partitioned according to the target bucket distribution to create the buckets for the original distribution. With the aligned original and target buckets, an original value can be mapped to the target bucket and appropriately scaled. Therefore, the encryption cost mainly comes from the bucket search procedure (proportional to $\log D$, where $D$ is the number of buckets).  Figure \ref{fig:enc-cost} shows the cost distributions for 20K records at different number of dimensions. The dimensionality has slight effects on the cost of RASP perturbation. Overall, the cost of processing 20K records is only around 0.1 second.

\subsection{Resilience to ICA Attack}
We have discussed the methods for countering the ICA distributional attack on the perturbed data. In this set of experiments, we evaluate how resilient the RASP perturbation is to the distributional attack.

\textbf{Results.} We simulate the ICA attack for randomly chosen matrices $A$. The data used in the experiment is the 10-dimensional Adult data with 10K records. Figure \ref{fig:adult-ica} shows the progressive results in a number of randomly chosen matrices $A$. The x-axis represents the total number of rounds for randomly choosing the matrix $A$; the y-axis represents the minimum dimensional NR\_MSE among all dimension.  Without OPE, the label ``Best-without-OPE'' represents the most resilient $A$ at the round $i$, ``Worst-without-OPE'' represents the $A$ of the weakest resilience, and ``Average-without-OPE'' is the average quality of the generated $A$ matrices for $i$ rounds. We see that the best case is already close to the upper bound 0.7 (Section \ref{sec:DataAttack}). With the OPE component, the worst case can also be significantly improved.

\subsection{Performance of Two-stage Range Query Processing}
In this set of experiments, we study the performance aspects of polyhedron-based range query processing. We use the two-stage processing strategy described in Section \ref{sec:RANGE}, and explore the additional cost incurred by this processing strategy. We implement the two-stage query processing based on an R*tree implementation provided by Dr. Hadjieleftheriou at AT\&T Lab\footnote{http://www2.research.att.com/~marioh/spatialindex/}. The block size is 4KB and we allow each block to contain only 20 entries to mimic a large database with many disk blocks. Samples from the original databases in different size (10,000 $-$ 50,000 records, i.e., 500-2500 data blocks) are perturbed and indexed for query processing. Another set of indices is also built on the original data for the performance comparison with non-perturbed query processing. We will use the number of disk block accesses, including index blocks and data blocks, to assess the performance to avoid the possible variation caused by other parts of the computer system. In addition, we will also show the wall-clock time for some results.

Recall the two-stage processing strategy: using the MBR to search the indexing tree, and filtering the returned result with the secured query in quadratic form. We will study the performance of the first stage by comparing it to two additional methods: (1) the original queries with the index built on the original data, which is used to identify how much additional cost is paid for querying the MBR of the transformed query; (2) the linear scan approach, which is the worst case cost. Range queries are generated randomly within the domain of the datasets, and then transformed with the method described in the Section \ref{sec:RANGE}. We also control the range of the queries to be [10\%,20\%,30\%,40\%,50\%] of the total range of the domain, to observe the effect of the scale of the range to the performance of query processing.

\begin{figure*}[tbh]
\begin{minipage}{\linewidth}
\centering
\subfigure{\includegraphics[width = 0.3\linewidth]{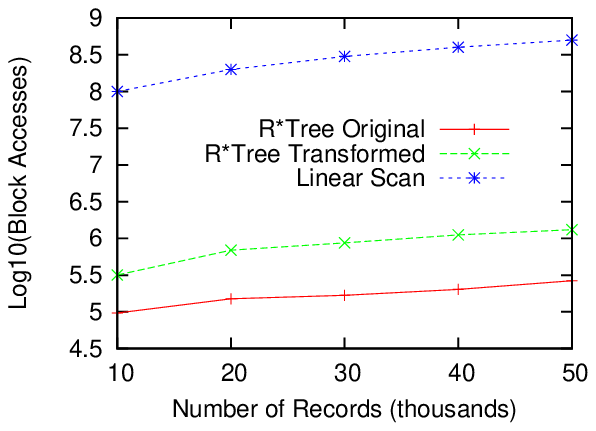}}
\subfigure{\includegraphics[width = 0.3\linewidth]{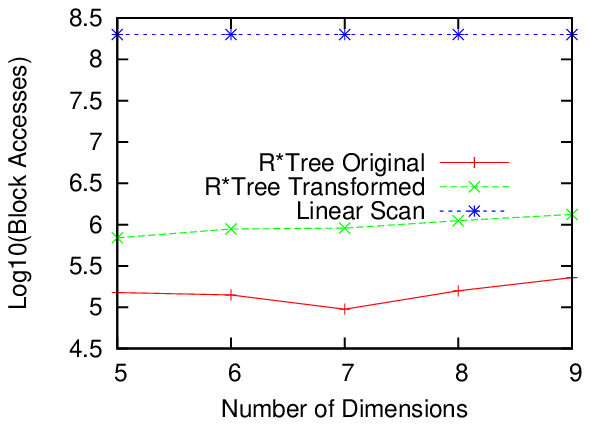}}
\subfigure{\includegraphics[width = 0.3\linewidth]{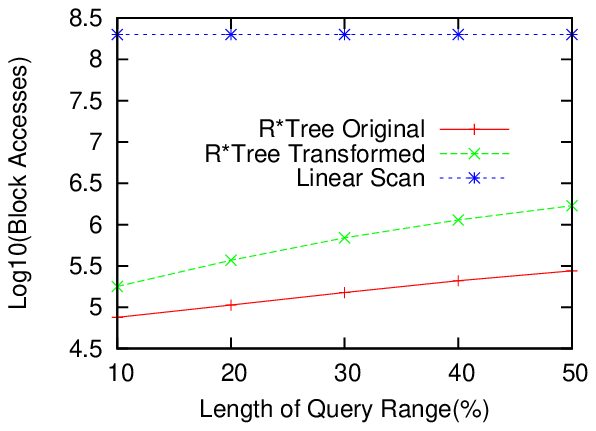}}
\caption{Performance comparison on Uniform data. Left: data size vs. cost of query; Middle: data dimensionality vs. cost of query; Right: query range (percentage of the domain) vs. cost of query}\label{fig:uniform-perf}
\end{minipage}
\end{figure*}

\begin{figure*}[tbh]
\begin{minipage}{\linewidth}
\centering
\subfigure{\includegraphics[width = 0.3\linewidth]{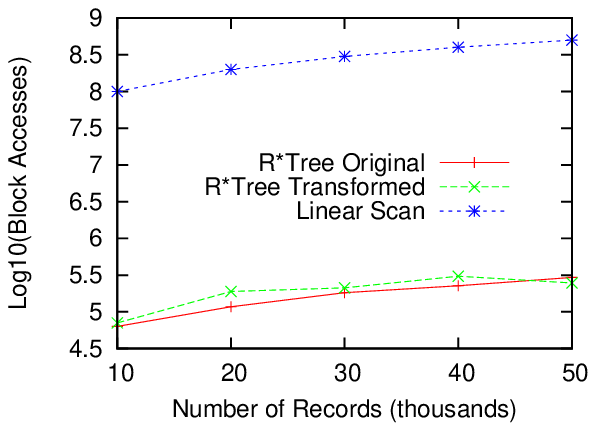}}
\subfigure{\includegraphics[width = 0.3\linewidth]{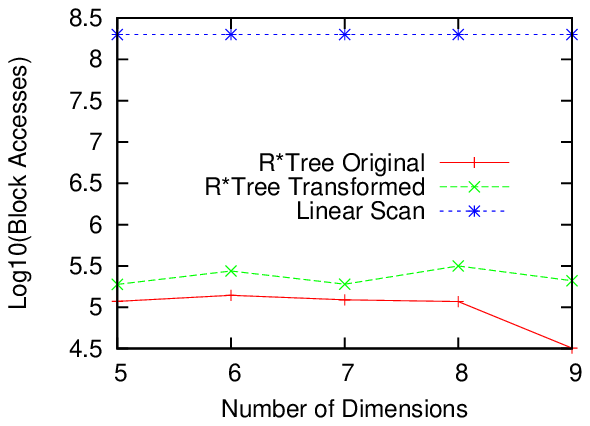}}
\subfigure{\includegraphics[width = 0.3\linewidth]{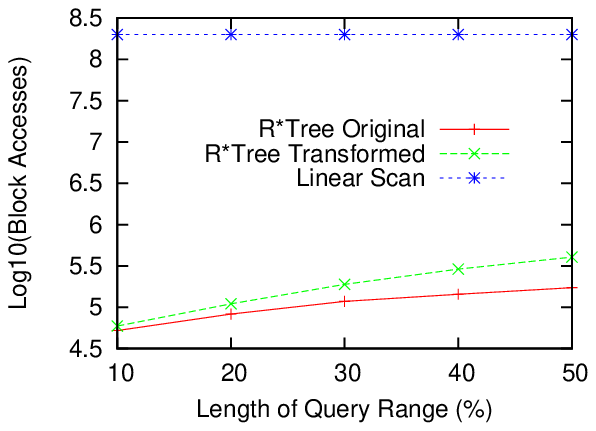}}
\caption{Performance comparison on Adult data. Left: data size vs. cost of query; Middle: data dimensionality vs. cost of query; Right: query range (percentage of the domain) vs. cost of query}\label{fig:adult-perf}
\end{minipage}
\end{figure*}

\noindent\textbf{Results. }The first pair of figures (the left subfigures of Figure \ref{fig:uniform-perf} and \ref{fig:adult-perf}) shows the number of block accesses for 10,000 queries on different sizes of data with different query processing methods. For clear presentation, we use $\log_{10}$(\# of block accesses) as the y-axis. The cost of linear scan is simply the number of blocks for storing the whole dataset. The data dimensionality is fixed to 5 and the query range is set to 30\%  of the whole domain. Obviously, the first stage with MBR for polyhedron has a cost much cheaper than the linear scan method and only moderately higher than R*tree processing on the original data. Interestingly, different distributions of data result in slightly different patterns. The costs of R*tree on transformed queries are very close to those of original queries for Adult data, while the gap is larger on uniform data. The costs over different dimensions and different query ranges show similar patterns.

\begin{table}[tbh]
\centering
\scriptsize
\begin{tabular}{|c|c|c|c|c|c|c|c|}
\hline
  &Linear Scan&R*Tree-Orig&PrepQ&Stage-1& Stage-2& rpq & purity \\
\hline
 Uniform5D & 21.12 & 0.27 &0.007 &4.19 & 0.01 &51.92 &7.76\%\\
Adult5D & 16.28 & 0.39 & 0.007& 1.9 &0.01& 5.12 &1.17\%\\
\hline
\end{tabular}
\caption{Wall clock cost distribution (milliseconds) and comparison.} \label{tab:perf}
\normalsize
\end{table}

We also studied the cost of the second stage.  We use ``PrepQ'' to represent the client-side cost of transforming queries, ``purity'' to represent the rate (final result count)/(1st stage result count), and records per query (``RPQ'') to represent the average number of records per query for the first stage results. The quadratic filtering conditions are used in experiments. Table \ref{tab:perf} compares the average wall-clock time (milliseconds) per query for the two stages, the RPQ values for stage 1, and the purity of the stage-1 result. The tests are run with the setting of 10K  queries, 20K records, 30\% dimensional query range and 5 dimensions. Since the 2nd stage is done in memory, its cost is much lower than the 1st-stage cost. Overall, the two stage processing is much faster than linear scan and comparable to the original R*Tree processing.

\subsection{Performance of kNN-R Query Processing}
In this set of experiments, we investigate several aspects of kNN query processing. (1) We will study the cost of (k, $\delta$)-Range algorithm, which mainly contributes to the server-side cost. (2) We will show the overall cost distribution over the cloud side and the proxy server. (3) We will show the advantages of kNN-R over another popular approach: the Casper approach \cite{mokbel06} for privacy-preserving kNN search.

\textbf{(k, $\delta$)-Range Algorithms}
In this set of experiments, we want to understand how the setting of the $\delta$ parameter affects the performance and the result precision. Figure \ref{fig:delta} shows the effect of $\delta$ setting to the $(k,\delta)$-range algorithm. Both datasets are two-dimensional data. As $\delta$ becomes larger, both the precision and the number of rounds needs to reach the $\delta$ condition decreases. Note that each round corresponds to one server-side range query. The choice of $\delta$ represents a tradeoff between the precision and the performance.

\begin{figure}[tbh]
\begin{minipage}{.65\linewidth}
\centering
\includegraphics[width =\linewidth]{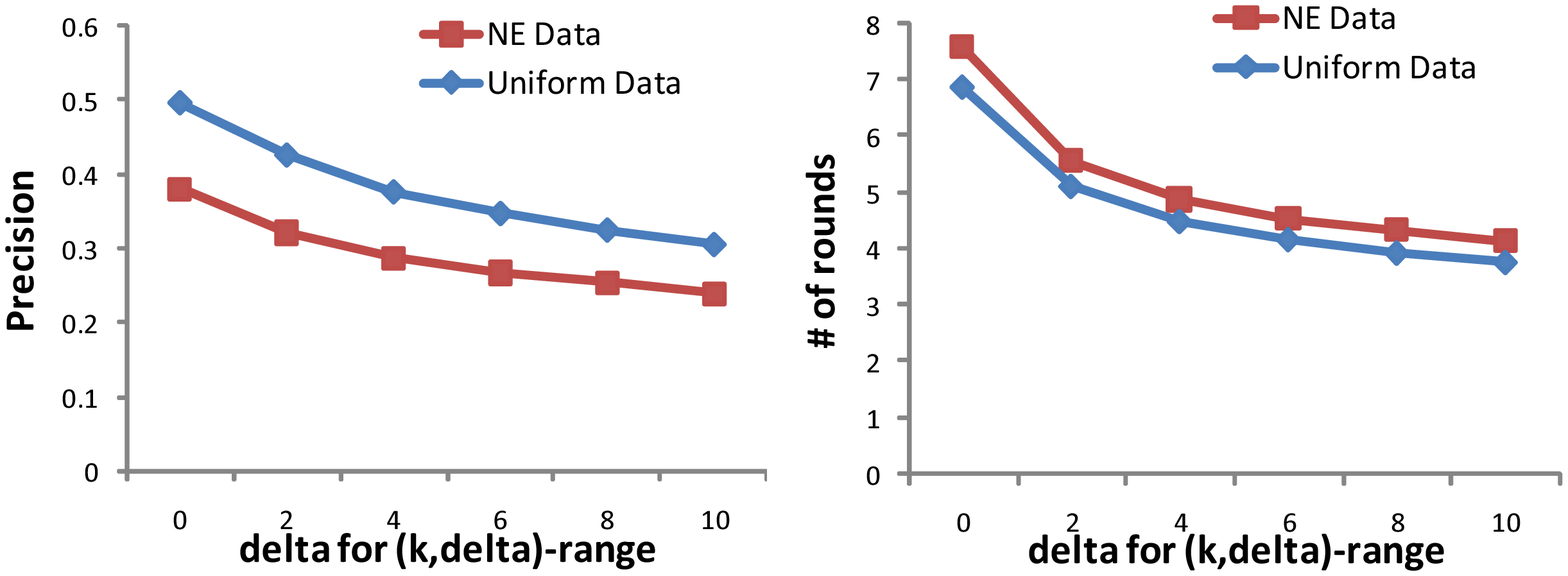}
\caption{Performance and result precision for different $\delta$ setting of the $(k,\delta)$-range algorithm for 2-dimensional data.}\label{fig:delta}
\end{minipage}
\hspace{0.04\linewidth}
\begin{minipage}{.3\linewidth}
\centering
\includegraphics[width =\linewidth]{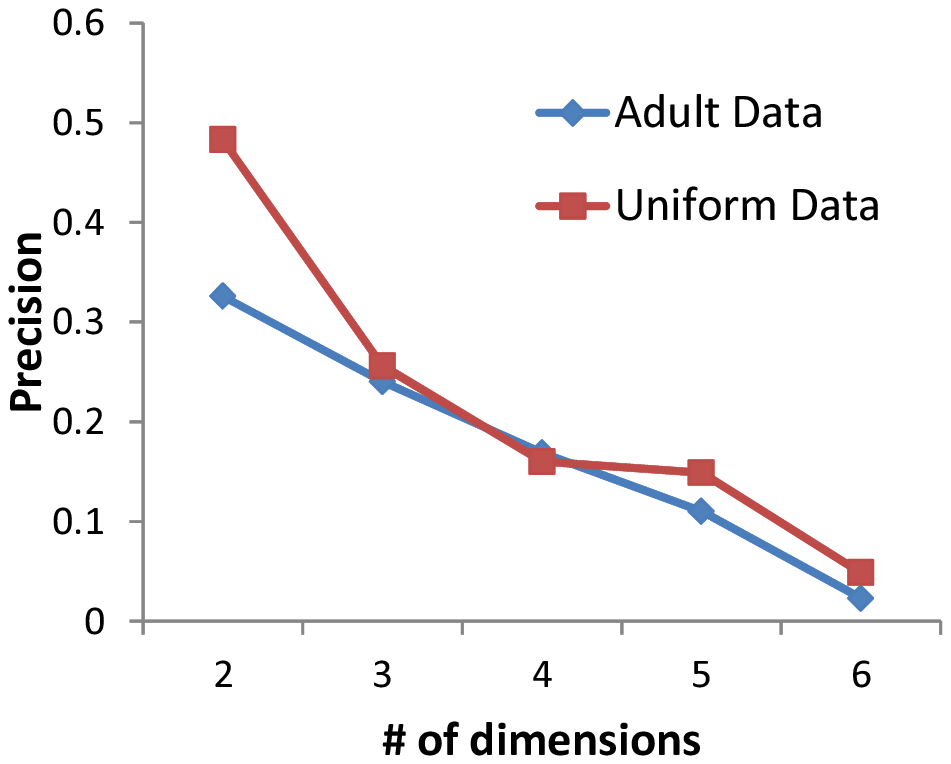}
\caption{Precision reduction with more dimension.}\label{fig:dim-precision}
\end{minipage}
\end{figure}


As we have discussed, the major weakness with the kNN-R algorithm is the precision reduction with increased dimensionality. When the dimensionality increases, the precision can significantly drop, which will increase the cost of post-processing in the client side. Figure \ref{fig:dim-precision} shows this phenomenon with the real Adult data and the simulated uniform data. However, compared to the overall cost, the client-side cost increase is still acceptable. We will show the comparison next.

\textbf{Overall Costs.}
Many secure approaches cannot use indices for query processing, which results in poor performance. For example, the secure dot-product approach \cite{wong09} encodes the points with random projections and recovers dot-products in query processing for distance comparison. The way of encoding data disallows the index-based query processing. Without the aid of indices, processing a kNN query will have to scan the entire database, leaving many optimization impossible to implement. 

One concern with the kNN-R approach is the workload on the proxy server. Different from range query, the proxy server will need to filter out the points returned by the server to find the final kNN. A reduced precision due to the increased dimensionality will imply an increased burden for the proxy server. We need to show how significant this proxy cost is.

We use the database of 100 thousands of data points and 1000 randomly selected queries for the 1NN experiment.  The wall clock time (milliseconds) is used to show the average cost per query in Table \ref{tab:knn}. We also list the cost of the secure dot-product method \cite{wong09} for comparison. Table \ref{tab:knn} shows that the proxy server takes a negligible pre-processing cost and a very small post-processing cost, even for reduced precision in the 5D datasets. We use 5\% domain length to extend the query point to form the initial higher bound. Compared to the dot-product method, the user-specified higher bound setting can cut off uninteresting regions, giving significant performance gain for sparse or skewed datasets, such as Adult5D. This cut-off effect cannot be implemented with the dot-product method. Furthermore, even for dense cases like the 2D datasets, the overall cost is only about half of the dot-product method.

\begin{table}[tbh]
\centering
\scriptsize
\begin{tabular}{|c|c|c|c|c|}
\hline
 Data\& setting & Liner Scan &Pre-processing & Server Cost & Post-processing\\
\hline
Uniform2D/kNN-R & 27.37 &0.01 &13.54&0.04 \\
Adult2D/kNN-R  & 26.09 &0.01 &14.48&0.06\\
\hline
Uniform5D/kNN-R & 33.03 &0.01 &13.79&0.34\\
Adult5D/kNN-R & 31.96 &0.01& 2.56 &0.05\\
\hline
\end{tabular}
\caption{Per-query performance comparison (milliseconds) between linear scan on the original non-perturbed data and index-aided kNN-R processing on perturbed data. 
} \label{tab:knn}
\normalsize
\end{table}

\textbf{Comparing kNN-R with the Casper Approach.}
In this set of experiments, we compare our approach and the Casper approach with a focus on the tradeoff between the data confidentiality and the query result precision (which indicates the workload of the in-house proxy). Based on the description in the paper \cite{mokbel06}, we implement the 1NN query processing algorithm for the experiment.

The Casper approach uses cloaking boxes to hide both the original data points in the database and the query points. It can also use the index to process kNN queries. The confidentiality of data in Casper is solely defined by the size of cloaking box. Roughly speaking, the actual point has the same probability to be anywhere in the cloaking box. However, the size of cloaking box also directly affects the precision of query results. Thus, the decision on the box size represents a tradeoff between the precision of query results and the data confidentiality.

For clear presentation, we assume each dimension has the same length of domain, $h$ and each cloaking box is square with an edge-length $e$. Assume the whole domain also has a uniform distribution. According to the variance of uniform distribution, the NR\_MSE measure is $\sqrt{6}e/(3h)$.  To achieve the protection of 10\% domain length, we have $e\approx 0.12 h$.

\begin{figure}
\centering
\includegraphics[width=.4\linewidth]{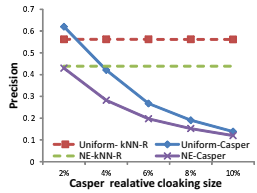}
\caption{The impact of cloaking-box size on precision for Casper for the NE data.}\label{fig:casper}
\end{figure}

In Figure \ref{fig:casper}, the x-axis represents NR\_MSE, i.e., the Casper's relative cloaking-edge length. It shows that when the edge length is increased from 2\% to 10\%, the precision dramatically drops from 62\% to 13\% for the 2D uniform data and 43\% to 10\% for the 2D NE data, which shows the severe conflict between precision and confidentiality. The kNN-R's results are also shown for comparison.

\section{Related work}
\label{sec:related-work}

\subsection{Protecting Outsourced Data}
\textbf{Order Preserving Encryption. }
Order preserving encryption (OPE) \cite{rakesh04} preserves the dimensional value order after encryption. It can be described as a function $y=F(x), \forall x_i, x_j, x_i<(>,=)x_j \Leftrightarrow y_i<(>,=)y_j$.  A well-known attack is based on attacker's prior knowledge on the original distributions of the attributes. If the attacker knows the original distributions and manages to identify the mapping between the original attribute and its encrypted counterpart, a bucket-based distribution alignment can be performed to break the encryption for the attribute \cite{chen11rasp}. 
There are some applications of OPE in outsourced data processing. For example, Yiu et al. \cite{yiu10} uses a hierarchical space division method to encode spatial data points, which preserves the order of dimensional values and thus is one kind of OPE. 

\textbf{Crypto-Index. }
Crypto-Index is also based on column-wise bucketization. It assigns a random ID to each bucket; the values in the bucket are replaced with the bucket ID to generate the auxiliary data for indexing. To utilize the index for query processing, a normal range query condition has to be transformed to a set-based query on the bucket IDs. For example, $X_i<a_i$ might be replaced with $X_i' \in [ID_1,ID_2,ID3]$.
A bucket-diffusion scheme \cite{hore04} was proposed to protect the access pattern, which, however, has to sacrifice the precision of query results, and thus increase the client's cost of filtering the query result.

\textbf{Distance-Recoverable Encryption.} DRE is the most intuitive method for preserving the nearest neighbor relationship. Because of the exactly preserved distances, many attacks can be applied \cite{wong09,liu06pkdd,keke07sdm}. Wong et al. \cite{wong09} suggest preserving dot products instead of distances to find kNN, which is more resilient to distance-targeted attacks. One drawback is the search algorithm is limited to linear scan and no indexing method can be applied. 

\subsection{Preserving Query Privacy}
Private information retrieval (PIR) \cite{chor98} tries to fully preserve the privacy of access pattern, while the data may not be encrypted. PIR schemes are normally very costly. Focusing on the efficiency side of PIR, Williams et al. \cite{williams08} use a pyramid hash index to implement efficient privacy preserving data-block operations based on the idea of Oblivious RAM. It is different from our setting of high throughput range query processing.

Hu et al. \cite{hu11} addresses the query privacy problem and requires the authorized query users, the data owner, and the cloud to collaboratively process kNN queries. However, most computing tasks are done in the user's local system with heavy interactions with the cloud server. The cloud server only aids query processing, which does not meet the principle of moving computing to the cloud.

Papadopoulos et al. \cite{papadopoulos10} uses private information retrieval methods \cite{chor98} to enhance location privacy. However, their approach does not consider protecting the confidentiality of data.
SpaceTwist \cite{yiu08} proposes a method to query kNN by providing a fake user's location for preserving location privacy. But the method does not consider data confidentiality, as well. The Casper approach \cite{mokbel06} considers both data confidentiality and query privacy, the detail of which has been discussed in our experiments.

\subsection{Other Related Work}
Another line of research \cite{shi07} facilitates authorized users to access only the authorized portion of data, e.g., a certain range, with a public key scheme. However, the underlying encryption schemes do not produce indexable encrypted data. The setting of multidimensional range query in \cite{shi07} is different from ours. Their approach requires that the data owner provides the indices and keys for the server, and authorized users use the data in the server.  While in the cloud database scenario, the cloud server takes more responsibilities of indexing and query processing. Secure keyword search on encrypted documents \cite{reza06,wang10icdcs,cao11} scans each encrypted document in the database and finds the documents containing the keyword, which is more like point search in database.
The research on privacy preserving data mining has discussed multiplicative perturbation methods \cite{keke11kais}, which are similar to the RASP encryption, but with more emphasis on preserving the utility for data mining. 

\section{Conclusion}
We propose the RASP perturbation approach to hosting query services in the cloud, which satisfies the CPEL criteria: data Confidentiality, query Privacy, Efficient query processing, and Low in-house workload. The requirement on low in-house workload is a critical feature to fully realize the benefits of cloud computing, and efficient query processing is a key measure of the quality of query services.

RASP perturbation is a unique composition of OPE, dimensionality expansion, random noise injection, and random projection, which provides unique security features. It aims to preserve the topology of the queried range in the perturbed space, and allows to use indices for efficient range query processing.
With the topology-preserving features, we are able to develop efficient range query services to achieve sub-linear time complexity of processing queries. We then develop the kNN query service based on the range query service. The security of both the perturbed data and the protected queries is carefully analyzed under a precisely defined threat model. We also conduct several sets of experiments to show the efficiency of query processing and the low cost of in-house processing.

We will continue our studies on two aspects: (1) further improve the performance of query processing for both range queries and kNN queries; (2) formally analyze the leaked query and access patterns and the possible effect on both data and query confidentiality.

\bibliographystyle{IEEETrans}
\bibliography{paper}

\begin{thebibliography}{10}
\providecommand{\url}[1]{#1}
\csname url@samestyle\endcsname
\providecommand{\newblock}{\relax}
\providecommand{\bibinfo}[2]{#2}
\providecommand{\BIBentrySTDinterwordspacing}{\spaceskip=0pt\relax}
\providecommand{\BIBentryALTinterwordstretchfactor}{4}
\providecommand{\BIBentryALTinterwordspacing}{\spaceskip=\fontdimen2\font plus
\BIBentryALTinterwordstretchfactor\fontdimen3\font minus
  \fontdimen4\font\relax}
\providecommand{\BIBforeignlanguage}[2]{{%
\expandafter\ifx\csname l@#1\endcsname\relax
\typeout{** WARNING: IEEEtranS.bst: No hyphenation pattern has been}%
\typeout{** loaded for the language `#1'. Using the pattern for}%
\typeout{** the default language instead.}%
\else
\language=\csname l@#1\endcsname
\fi
#2}}
\providecommand{\BIBdecl}{\relax}
\BIBdecl

\bibitem{rakesh04}
R.~Agrawal, J.~Kiernan, R.~Srikant, and Y.~Xu, ``Order preserving encryption
  for numeric data,'' in \emph{{Proceedings of ACM SIGMOD Conference}}, 2004.

\bibitem{cloud09}
M.~Armbrust, A.~Fox, R.~Griffith, A.~D. Joseph, R.~K. andAndy Konwinski,
  G.~Lee, D.~Patterson, A.~Rabkin, I.~Stoica, and M.~Zaharia, ``Above the
  clouds: A berkeley view of cloud computing,'' \emph{Technical Report,
  University of Berkerley}, 2009.

\bibitem{bau11}
J.~Bau and J.~C. Mitchell, ``Security modeling and analysis,'' \emph{IEEE
  Security and Privacy}, vol.~9, no.~3, pp. 18--25, 2011.

\bibitem{boyd04}
S.~Boyd and L.~Vandenberghe, \emph{Convex Optimization}.\hskip 1em plus 0.5em
  minus 0.4em\relax Cambridge University Press, 2004.

\bibitem{cao11}
N.~Cao, C.~Wang, M.~Li, K.~Ren, and W.~Lou, ``Privacy-preserving multi-keyword
  ranked search over encrypted cloud data,'' in \emph{INFOCOMM}, 2011.

\bibitem{chen11rasp}
K.~Chen, R.~Kavuluru, and S.~Guo, ``Rasp: Efficient multidimensional range
  query on attack-resilient encrypted databases,'' in \emph{ACM Conference on
  Data and Application Security and Privacy}, 2011, pp. 249--260.

\bibitem{keke11kais}
K.~Chen and L.~Liu, ``Geometric data perturbation for outsourced data mining,''
  \emph{{Knowledge and Information Systems}}, 2011.

\bibitem{keke07sdm}
K.~Chen, L.~Liu, and G.~Sun, ``Towards attack-resilient geometric data
  perturbation,'' in \emph{SIAM Data Mining Conference}, 2007.

\bibitem{chor98}
B.~Chor, E.~Kushilevitz, O.~Goldreich, and M.~Sudan, ``Private information
  retrieval,'' \emph{ACM Computer Survey}, vol.~45, no.~6, pp. 965--981, 1998.

\bibitem{reza06}
R.~Curtmola, J.~Garay, S.~Kamara, and R.~Ostrovsky, ``Searchable symmetric
  encryption: improved definitions and efficient constructions,'' in
  \emph{Proceedings of the 13th ACM conference on Computer and communications
  security}.\hskip 1em plus 0.5em minus 0.4em\relax New York, NY, USA: ACM,
  2006, pp. 79--88.

\bibitem{draper98}
N.~R. Draper and H.~Smith, \emph{Applied Regression Analysis}.\hskip 1em plus
  0.5em minus 0.4em\relax Wiley, 1998.

\bibitem{hakan02sigmod}
H.~Hacigumus, B.~Iyer, C.~Li, and S.~Mehrotra, ``Executing sql over encrypted
  data in the database-service-provider model,'' in \emph{{Proceedings of ACM
  SIGMOD Conference}}, 2002.

\bibitem{hastie01}
T.~Hastie, R.~Tibshirani, and J.~Friedman, \emph{The Elements of Statistical
  Learning}.\hskip 1em plus 0.5em minus 0.4em\relax Springer-Verlag, 2001.

\bibitem{hore04}
B.~Hore, S.~Mehrotra, and G.~Tsudik, ``A privacy-preserving index for range
  queries,'' in \emph{{Proceedings of Very Large Databases Conference (VLDB)}},
  2004.

\bibitem{hu11}
H.~Hu, J.~Xu, C.~Ren, and B.~Choi, ``Processing private queries over untrusted
  data cloud through privacy homomorphism,'' \emph{{Proceedings of IEEE
  International Conference on Data Engineering (ICDE)}}, pp. 601--612, 2011.

\bibitem{huang05}
Z.~Huang, W.~Du, and B.~Chen, ``Deriving private information from randomized
  data,'' in \emph{{Proceedings of ACM SIGMOD Conference}}, 2005.

\bibitem{hyvarinen01}
A.~Hyvarinen, J.~Karhunen, and E.~Oja, \emph{Independent Component
  Analysis}.\hskip 1em plus 0.5em minus 0.4em\relax Wiley, 2001.

\bibitem{jolliffe86}
I.~T. Jolliffe, \emph{Principal Component Analysis}.\hskip 1em plus 0.5em minus
  0.4em\relax Springer, 1986.

\bibitem{liff06}
F.~Li, M.~Hadjieleftheriou, G.~Kollios, and L.~Reyzin, ``Dynamic authenticated
  index structures for outsourced databases,'' in \emph{{Proceedings of ACM
  SIGMOD Conference}}, 2006.

\bibitem{liu06pkdd}
K.~Liu, C.~Giannella, and H.~Kargupta, ``An attacker's view of distance
  preserving maps for privacy preserving data mining,'' in \emph{{Proceedings
  of PKDD}}, Berlin, Germany, September 2006.

\bibitem{yiu10}
M.~L. Liu, G.~Ghinita, C.~S.Jensen, and P.~Kalnis, ``Enabling search services
  on outsourced private spatial data,'' \emph{The International Journal of on
  Very Large Data Base}, vol.~19, no.~3, 2010.

\bibitem{rtreebook}
Y.~Manolopoulos, A.~Nanopoulos, A.~Papadopoulos, and Y.~Theodoridis,
  \emph{R-trees: Theory and Applications}.\hskip 1em plus 0.5em minus
  0.4em\relax Springer-Verlag, 2005.

\bibitem{marimont79}
R.~Marimont and M.~Shapiro, ``Nearest neighbour searches and the curse of
  dimensionality,'' \emph{Journal of the Institute of Mathematics and its
  Applications}, vol.~24, pp. 59--70, 1979.

\bibitem{mokbel06}
M.~F. Mokbel, C.~yin Chow, and W.~G. Aref, ``The new casper: Query processing
  for location services without compromising privacy,'' in \emph{{Proceedings
  of Very Large Databases Conference (VLDB)}}, 2006, pp. 763--774.

\bibitem{paillier99}
P.~Paillier, ``Public-key cryptosystems based on composite degree residuosity
  classes,'' in \emph{EUROCRYPT}.\hskip 1em plus 0.5em minus 0.4em\relax
  Springer-Verlag, 1999, pp. 223--238.

\bibitem{papadopoulos10}
S.~Papadopoulos, S.~Bakiras, and D.~Papadias, ``Nearest neighbor search with
  strong location privacy,'' in \emph{{Proceedings of Very Large Databases
  Conference (VLDB)}}, 2010.

\bibitem{franco85}
F.~P. Preparata and M.~I. Shamos, \emph{Computational Geometry: An
  Introduction}.\hskip 1em plus 0.5em minus 0.4em\relax Springer-Verlag, 1985.

\bibitem{rudelson09}
M.~Rudelson and R.~Vershynin, ``Smallest singular value of a random rectangular
  matrix,'' \emph{Communications on Pure and Applied Mathematics}, vol.~62, pp.
  1707--1739, 2009.

\bibitem{shi07}
E.~Shi, J.~Bethencourt, T.-H.~H. Chan, D.~Song, and A.~Perrig,
  ``Multi-dimensional range query over encrypted data,'' in \emph{IEEE
  Symposium on Security and Privacy}, 2007.

\bibitem{sion05}
R.~Sion, ``Query execution assurance for outsourced databases,'' in
  \emph{{Proceedings of Very Large Databases Conference (VLDB)}}, 2005.

\bibitem{wang10icdcs}
C.~Wang, N.~Cao, J.~Li, K.~Ren, and W.~Lou, ``Secure ranked keyword search over
  encrypted cloud data,'' in \emph{{Proceedings of IEEE International
  Conference on Distributed Computing Systems (ICDCS)}}, 2010.

\bibitem{williams08}
P.~Williams, R.~Sion, and B.~Carbunar, ``Building castles out of mud: Practical
  access pattern privacy and correctness on untrusted storage,'' in \emph{{ACM
  Conference on Computer and Communications Security}}, 2008.

\bibitem{wong09}
W.~K. Wong, D.~W.-l. Cheung, B.~Kao, and N.~Mamoulis, ``Secure knn computation
  on encrypted databases,'' in \emph{{Proceedings of ACM SIGMOD
  Conference}}.\hskip 1em plus 0.5em minus 0.4em\relax New York, NY, USA: ACM,
  2009, pp. 139--152.

\bibitem{xie07}
M.~Xie, H.~Wang, J.~Yin, and X.~Meng, ``Integrity auditing of outsourced
  data,'' in \emph{{Proceedings of Very Large Databases Conference (VLDB)}},
  2007, pp. 782--793.

\bibitem{yiu08}
M.~L. Yiu, C.~S. Jensen, X.~Huang, and H.~Lu, ``Spacetwist: Managing the
  trade-offs among location privacy, query performance, and query accuracy in
  mobile services,'' in \emph{{Proceedings of IEEE International Conference on
  Data Engineering (ICDE)}}, Washington, DC, USA, 2008, pp. 366--375.

\end{thebibliography}

\begin{IEEEbiography}[{\includegraphics[width=1in,height=1.25in,clip,keepaspectratio]{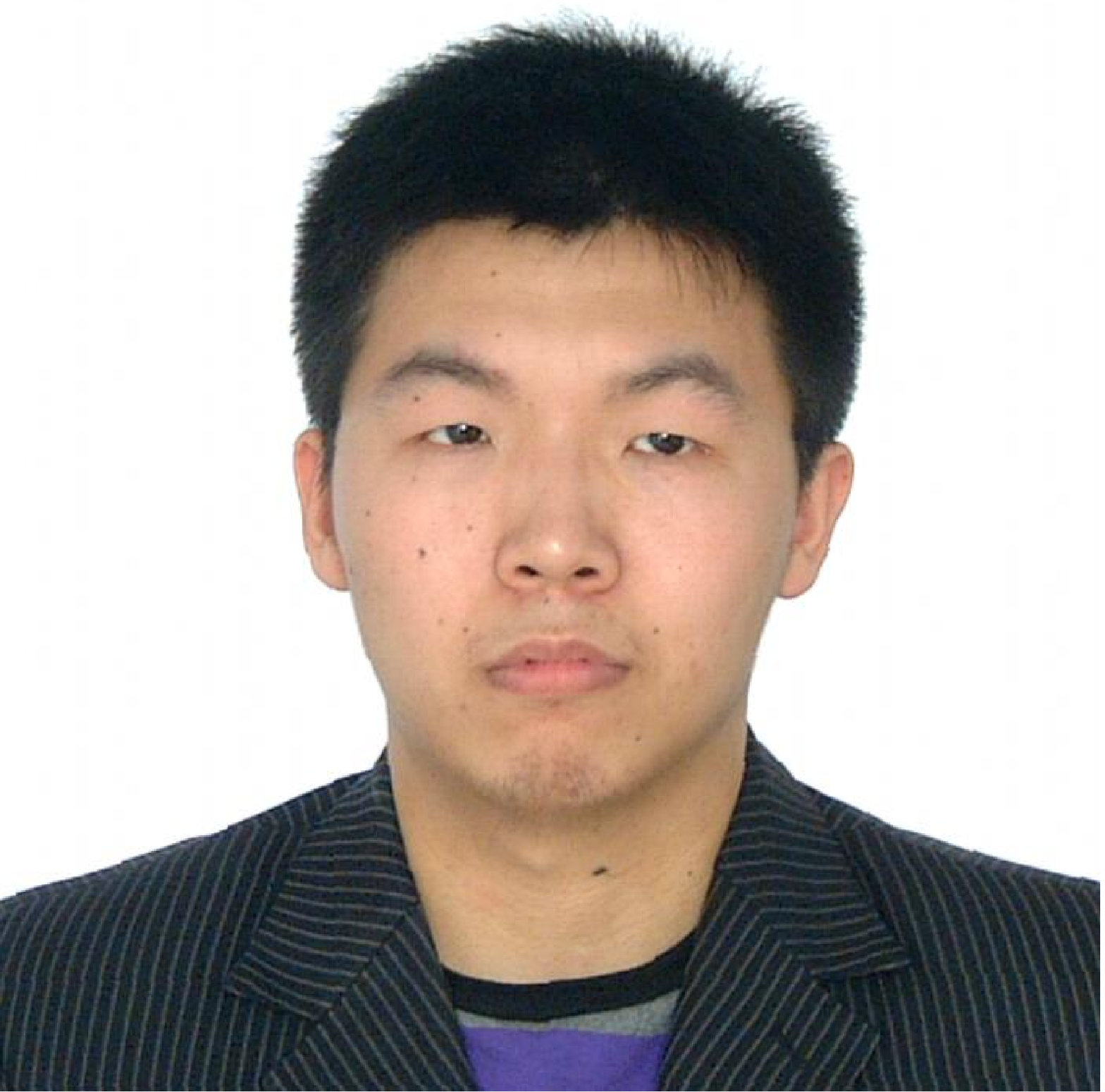}}]{Huiqi Xu} is a PhD student in the Distributed Computing Systems group in the University of Minnesota at Twin Cities. He obtained his Master's degree in Computer Science from Wright State University in June 2012 and his Bachelor's  degree in Computer Science from Chongqing University in June 2009. His research interests include privacy-aware computing and cloud computing.
\end{IEEEbiography}

\begin{IEEEbiography}[{\includegraphics[width=1in,height=1.25in,clip,keepaspectratio]{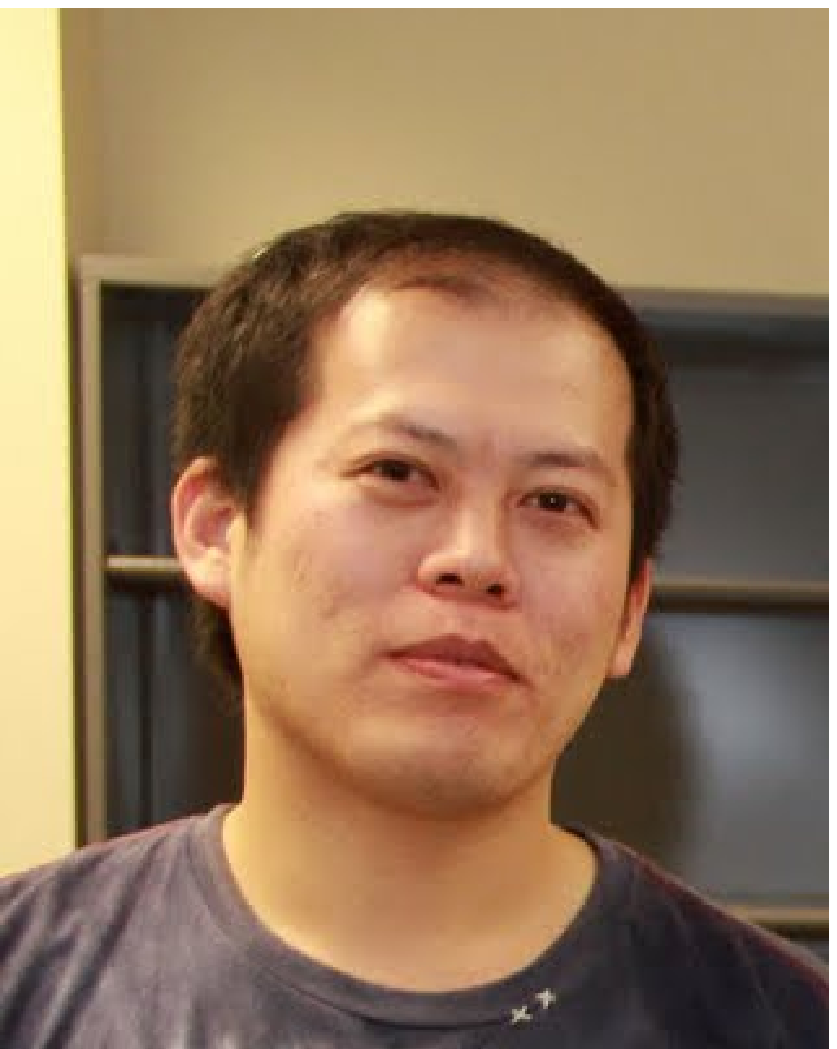}}]{Shumin Guo} is currently a PhD student in the Department of Computer Science and Engineering, and a member of the Data Intensive Analysis and Computing (DIAC) Lab, at Wright State University, Dayton, OH, USA.  He received his Master's degree in Electronics Engineering from Xidian University, Xi'an China, in 2008. His current research interest are privacy preserving data mining, social network analysis and cloud computing.
\end{IEEEbiography}

\begin{IEEEbiography}[{\includegraphics[width=1in,height=1.25in,clip,keepaspectratio]{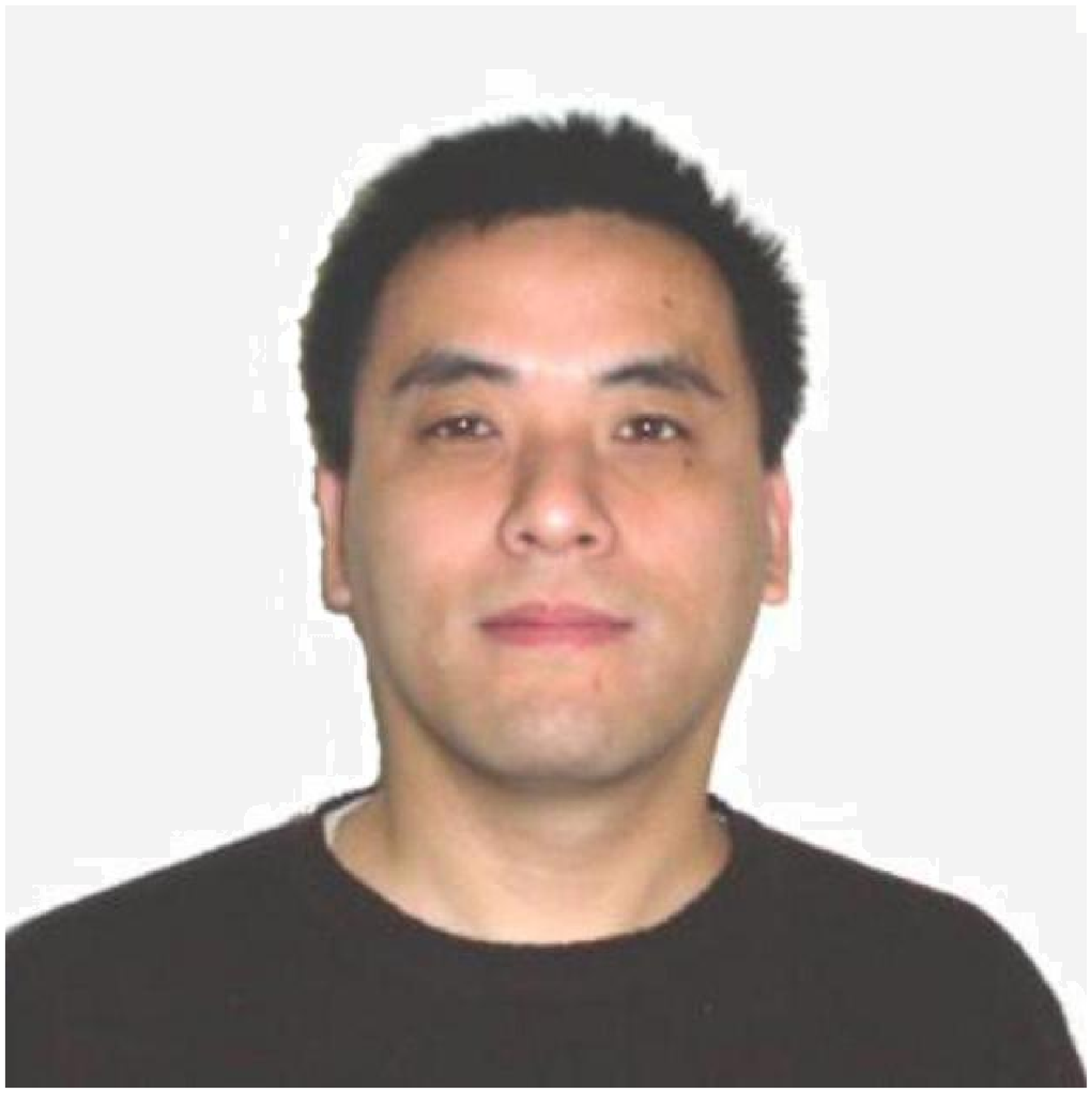}}]
{Keke Chen} is an assistant professor in the Department of Computer Science and Engineering, and a member of the Ohio Center of Excellence in Knowledge Enabled Computing (the Kno.e.sis Center), at Wright State University. He directs the Data Intensive Analysis and Computing (DIAC) Lab at the Kno.e.sis Center. He earned his PhD degree from Georgia Institute of Technology in 2006, his Master's degree from Zhejiang University in China in 1999, and his Bachelor's degree from Tongji University in China in 1996. All degrees are in Computer Science.  His current research areas include visual exploration of big data, secure data services and mining of outsourced data, privacy of social computing, and cloud computing. During 2006-2008, he was a senior research scientist at Yahoo! Labs, working on web search ranking, cross-domain ranking, and web-scale data mining. He owns three patents for his work in Yahoo!.
\end{IEEEbiography}
\newpage
\section{Appendix}
\subsection{Proofs.}
\textbf{Proving that RASP is not OPE.} \\
Let $y=(E_{ope}(x)^T, 1, v)^T$ and we only need to prove that $F(y)=Ay$ does not preserve the dimensional value order. Let $\mathbf{f}^i$ be the selection vector $(0,\dots,1,\dots,0)$ i.e., only the $i$-th dimension is 1 and other dimensions are 0. Then, $(\mathbf{f}^i)^Ty$ will return the value at dimension $i$ of $y$.

\begin{proof}
Let $A$ be an invertible matrix with at least two non-zero entries in each row. For any vector $y$, let $\mathbf{y'}=A\mathbf{y}$. For any two vectors $s$ and $t$, using the dimensional selection vector $f^i$, we have $s_i' = (\mathbf{f}^i)^TA\mathbf{s}$ and $t_i' = (\mathbf{f}^i)^TA\mathbf{t}$ . If the dimensional order is preserved, we will have $(s_i-t_i)(s_i'-t_i') >0$. However,
\begin{eqnarray}
(s_i-t_i)(s_i'-t_i') &=& (s_i-t_i)(\mathbf{f}^i)^TA(s-t)\nonumber \\
&=&(s_i-t_i)\sum_{j=1}^ka_{i,j}(s_j-t_j),
\end{eqnarray}
where $a_{i,j}$ is the $i$-th row $j$-th column element of $A$. Without loss of generality, let's assume $s_i>t_i$ (for $s_i<t_i$ the same proof applies).  It is straightforward to see that the sign of $(s_i-t_i)(s_i'-t_i')$ is subject to the values $s_j$ and $t_j$ in other dimensions $j \neq i$. As a result, RASP does not preserve the dimensional order.
\end{proof}

\textbf{Proving that  $MBR^{(MID)}$ encloses $MBR^{(mid)}$.}
\begin{proof}
In general, the MBR of an arbitrary polyhedron can be derived based on the vertices of the
polyhedron. Based on the property of convexity preserving of RASP, a polyhedron is mapped to another polyhedron in the encrypted space. Concretely, let a polyhedron $P$ has $m$ vertices $\{x_1,\ldots,x_m\}$, which are mapped to the vertices in the encrypted space: $\{y_1,\ldots,y_m\}$. Then, the upper bound and lower bound of dimension  $j$ of the $MBR$ of the polyhedron in the encrypted space are  determined by $\max\{ y_{ij}, i=1\ldots m\}$ and $\min\{y_{ij}, i=1\ldots m\}$, respectively.

Since we only use MBR to reduce the set of results for filtering, a slightly larger MBR would still guarantee the correctness of the MBR based query processing algorithm, with possibly increased filtering cost. In the following, we try to find such a MBR to enclose MBR$^{(mid)}$. By the definition of the square ranges $S^{(low)}$, $S^{(mid)}$ and
$S^{(high)}$, their vertices have the relationship $x_{i}^{(mid)} = (x_{i}^{(low)} +
x_{i}^{(high)})/2$. The images of the vertices are notated as $y_i^{(low)}$, $y_i^{(high)}$, and $y_i^{(mid)}$, respectively. Correspondingly, the MBR$^{(mid)}$ in the perturbed space should be found from $\{y_1^{(mid)},\ldots,y_m^{(mid)}\}$, where $y_i^{(mid)} = A (x_i^{(mid)}, 1, v_i^{(mid)})^T$. Since $(y_{i}^{(low)} + y_{i}^{(high)})/2$ $= A (x_i^{(mid)}, 1, (v_i^{(low)}+v_i^{(high)})/2)^T$, and $(v_i^{(low)}+v_i^{(high)})/2$ is a valid positive random number.
Thus, MBR$^{(mid)}$ can be determined with vertices $\{(y_{i}^{(low)} + y_{i}^{(high)})/2\}$. 

Let the j-th dimension of MBR$^{(L)}$ represented as \\$[s_{j,min}^{(L)}, s_{j,max}^{(L)}]$, where $s_{j,min}^{(L)}$ $= \min\{ y_{ij}^{(L)}, i=1\ldots m\}$, and $s_{j,max}^{(L)} = \max\{ y_{ij}^{(high)}, i=1\ldots m\}$. Now we choose the MBR$^{(MID)}$ as follows: for j-th dimension we use $[(s_{j,min}^{(low)}+s_{j,min}^{(high)})/2, (s_{j,max}^{(low)}+s_{j,max}^{(high)})/2]$. We show that

For two sets of $m$ real values $\{a_1,\ldots, a_m\}$ and $\{b_1,\ldots, b_m\}$, it is easy to verify that
\begin{equation} \label{eq:max}
\max\{a_1,\ldots, a_m\} + \max\{b_1,\ldots, b_m\} \geq \max\{a_1+b_1,\ldots, a_1+b_m\}
\end{equation}
\begin{equation}\label{eq:min}
\min\{a_1,\ldots, a_m\} + \min\{b_1,\ldots, b_m\} \leq \min\{a_1+b_1,\ldots, a_1+b_m\}.
\end{equation}
Thus, $(s_{i,min}^{(low)}+s_{i,min}^{(high)})/2 \leq$ $\min\{(y_{ij}^{(low)}+y_{ij}^{(high)})/2, i=1\ldots m\}$ $= s_{i,min}^{(mid)}$, and $(s_{i,max}^{(low)}+s_{i,max}^{(high)})/2 \geq s_{i,max}^{(mid)}$. Since for each dimension, MBR$^{(MID)}$ encloses MBR$^{(mid)}$, we have MBR$^{(MID)}$ encloses $MBR^{(mid)}$.
\end{proof}

\subsection{Algorithms}
\newcommand{\enc}{\ensuremath{\mbox{\bf RASP\_Perturb}}}
\begin{algorithm}[h]
\small
\caption{RASP Data Perturbation}\label{algo:enc}
\begin{algorithmic}[1]
\State $\enc(X, RNG, RIMG, K_o)$
\State Input: $X$: $k\times n$ data records, $RNG$: random real value generator that draws values from the standard normal distribution, $RIMG$ : random  invertible matrix generator, $K_{ope}$: key for OPE $E_{ope}$; Output: the matrix $A$
\medskip
\State $A \leftarrow 0$;
\State $A_3 \leftarrow$ the last column of $A$;
\State $v_0\leftarrow 4$;
\While {$A_3$ contains zero}
\State generate $A$ with $RIMG$;
\EndWhile
\For{each record $x$ in $X$}
\State $v\leftarrow v_0-1$;
\While {$v<v_0$}
\State $v\leftarrow$ RNG;
\EndWhile
\State $y \leftarrow A((E_{ope}(x, K_{ope}))^T, 1, v)^T$;
\State submit $y$ to the server;
\EndFor
\State return $A$;
\end{algorithmic}
\end{algorithm}

Algorithm \ref{algo:qenc} encodes a normal range query and generate the $Q_i$ matrices and the MBR for the transformed query.
\begin{algorithm}[h]
\small
\caption{RASP Secure Query Transformation.}\label{algo:qenc}
\newcommand{\qt}{\ensuremath{\mbox{\bf QuadraticQuery}}}
\begin{algorithmic}[1]
\State $\qt(Cond, A)$
\State Input: Cond: $2d$ simple conditions for $d$-dimensional data, 2 conditions for each dimension. $A$:the perturbation matrix. Output: the MBR of the transformed range and the quadratic query matrices $Q_i, i=1\ldots 2d$.
\medskip
\State $v_0\leftarrow 4$;
\For{each condition $C_i$ in Cond}
\State $u \leftarrow zeros(d+2,1)$;
\If{$C_i$ is like $X_j < a_j$}
\State $u_{j} \leftarrow 1$, $u_{d+1} \leftarrow -a_j$;
\EndIf
\If{$C_i$ is like $X_j> a_j$}
\State $u_{j} \leftarrow -1$, $u_{d+1} \leftarrow a_j$;
\EndIf
\State $w \leftarrow zeros(d+2,1)$;
\State $w_{d+2} \leftarrow 1$;
\State $w_{d+1} \leftarrow v_0$;
\State $Q_i \leftarrow (A^{-1})^T\mathbf{u}\mathbf{w}^TA^{-1}$;
\EndFor
\State Use the vertex transformation method to find the MBR of the transformed queries;
\State return MBR and $\{Q_i, i=1\ldots 2d\}$;
\end{algorithmic}
\end{algorithm}

In Algorithm \ref{algo:qp}, the two-stage query processing uses the MBR to find the initial query result and then filters the result with the transformed query conditions $y^TQ_iy< 0$, where the matrices \{$Q_i$\}  and the MBR are passed by the client and $y$ is each perturbed record.

\begin{algorithm}[h]
\caption{Two-Stage Query Processing.}\label{algo:qp}
\newcommand{\qp}{\ensuremath{\mbox{\bf ProcessQuery}}}
\begin{algorithmic}[1]
\State $\qp(MBR, \{Q_i\})$
\State Input: MBR: MBR for the transformed query; $\{Q_i\}$:filtering conditions; Output: the set of perturbed records satisfying the conditions.
\medskip
\State $Y \leftarrow $ use the indexing tree to find answers for MBR;
\State $Y'\leftarrow \emptyset$;
\For{each record $y$ in Y}
\State success $\leftarrow 1$
\For{each condition $Q_i$}
\If{$y^TQ_iy \geq 0$}
\State    success $\leftarrow 0$;
\State break;
\EndIf
\EndFor
\If{success = 1}
\State add $y_i$ into $Y'$;
\EndIf
\EndFor
\State return $Y'$ to the client;
\end{algorithmic}
\end{algorithm}

The following Algorithm \ref{alg:k-delta} describes the details of the $(K, \delta)$-Range algorithm for determining the inner range.
\begin{algorithm}[h]
\caption{$(K, \delta)$-Range Algorithm}\label{alg:k-delta}
\begin{algorithmic}[1]
\Procedure{$(K, \delta)$-Range}{$L_{1}, L_{m}, k, \delta$}
\State $high \gets L_{m}$, $low \gets L_{1}$;
\While{$high - low \geq \mathcal{E}$}
\State $mid \gets (high + low)/2 $;
\State num $\gets$ number of points in $S^{(mid)}$;
\If {$ num \geq k \&\& num \leqslant k+\delta$}
\State Break the loop;
\ElsIf {$num > k+delta$}
\State $ high \gets mid $;
\Else
\State $ low \gets mid $;
\EndIf
\EndWhile
\State \textbf{return} $S^{(mid)}$;
\EndProcedure
\end{algorithmic}
\end{algorithm}

\end{document}